  \newcommand{\XXVIII}{GTP28}
  \newcommand{\XLII}{GTP42}
  \newcommand{\XLIII}{GTP43}
  \newcommand{\XLIV}{GTP44}
  \newcommand{\Extra}[1]{}
\newcommand*{\Subsection}[1]{\subsection*{#1}\addcontentsline{toc}{subsection}{#1}}
\newcommand*{\st}{\mathrel{|}}
\newcommand*{\dd}{\,\mathrm{d}}
\newcommand*{\K}{\mathcal{K}}
\newcommand*{\FFF}{\mathcal{F}}
\newcommand*{\TTT}{\mathcal{T}}
\newcommand*{\cemetery}{\text{\ding{61}}}
\newcommand*{\Mu}{\mathrm{M}}
\DeclareMathOperator{\III}{\boldsymbol{1}}
\DeclareMathOperator{\dom}{dom}
\DeclareMathOperator{\Var}{Var}
\DeclareMathOperator{\Cov}{Cov}
\newcommand*{\bbbp}{\mathbb{P}}
\DeclareMathOperator{\Prob}{\bbbp}
\DeclareMathOperator{\UpProb}{\overline{\bbbp}}
\newcommand{\pretend}[2]{\smash{\mathrlap{#1}}\phantom{#2}}
\DeclareMathOperator{\UpProbI}{{\textstyle\pretend{\overline{\bbbp}}{\bbbp}^{\it I}}}
\newcommand*{\bbbe}{\mathbb{E}}
\DeclareMathOperator{\Expect}{\bbbe}
\DeclareMathOperator{\UpExpect}{\overline{\bbbe}}
\DeclareMathOperator{\UpExpectI}{\textstyle\pretend{\overline{\bbbe}}{\bbbe}^{\it I}}
\DeclareMathOperator{\EEE}{\mathcal{E}}
\DeclareMathOperator{\LLL}{\mathcal{L}}
\newcommand*{\bbbr}{\mathbb{R}}
\theoremstyle{plain}
\newtheorem{theorem}{Theorem}[section]
\newtheorem{corollary}[theorem]{Corollary}
\newtheorem{lemma}[theorem]{Lemma}
\theoremstyle{definition}
\newtheorem{remark}[theorem]{Remark}
\title{Towards a probability-free theory of continuous martingales}
\author{Vladimir Vovk and Glenn Shafer\\
  \texttt{{\rm\{}volodya.vovk,glennrayshafer{\rm\}}{\rm@}gmail.com}}
\begin{document}
  \maketitle

  \begin{abstract}
    Without probability theory, we define classes of supermartingales, martingales, and semimartingales
    in idealized financial markets with continuous price paths.
    This allows us to establish probability-free versions of a number of standard results in martingale theory,
    including the Dubins--Schwarz theorem, the Girsanov theorem, and results concerning the It\^o integral.
    We also establish the existence of an equity premium and a CAPM relationship in this probability-free setting.

      \bigskip

      \noindent
      The version of this paper at
      \href{http://probabilityandfinance.com/articles/index.html#45}{http://probabilityandfinance.com}
      (Working Paper 45)
      is updated most often.
  \end{abstract}

\section{Introduction}

We consider a financial market in which a finite number of securities with continuous price paths are traded.
We do not make any stochastic assumptions,
and our basic definitions are in the spirit of what we call game-theoretic probability (see, e.g., \cite{Shafer/Vovk:2001}).
This theory's key notion, probability-free superhedging, has been formalized in different ways
in the case of continuous time.
The definition suggested in \cite{\XXVIII} is very cautious.
Perkowski and Pr\"omel \cite{Perkowski/Promel:2016-local} propose a broader definition,
making superhedging easier.
In this paper we propose an even broader definition of superhedging
and use it to define concepts such as continuous martingale, nonnegative supermartingale, etc.
This allows us to derive probability-free versions of several standard results of martingale theory.
  In particular, we will give a simple proof of the Girsanov theorem stated
  (and proved in a roundabout way)
  in \cite{\XLIV}.
At the end of the paper we use our results to give a probability-free treatment of the equity premium and CAPM.

\section{Supermartingales, martingales, and semimartingales}
\label{sec:martingales}

Our model of the financial market contains $J^*$ traded securities whose price paths are denoted $S_1,\ldots,S_{J^*}$;
these are continuous functions $S_j:[0,\infty)\to\bbbr$, $j=1,\ldots,J^*$.
Apart from the price paths of traded securities,
our model will also contain other paths (``information paths'')
$S_{J^*+1},\ldots,S_{J}$
reflecting the ``side information'' (such as the economy's or individual companies' fundamentals)
available to the traders;
these functions are also assumed to be continuous.

Formally, our \emph{sample space} is the set $\Omega:=C[0,\infty)^J$ of all $J$-tuples $\omega=(S_1,\ldots,S_{J})$
of continuous functions $S_j:[0,\infty)\to\bbbr$, $j=1,\ldots,J$;
$J^*\in\{1,\ldots,J\}$ is a parameter of the market (the number of traded securities).
Each $\omega=(S_1,\ldots,S_{J})\in\Omega$ is identified with the function
$\omega:[0,\infty)\to(0,\infty)^{J}$
defined by $\omega(t):=(S_1(t),\ldots,S_{J}(t))$, $t\in[0,\infty)$.
We equip $\Omega$ with the $\sigma$-algebra $\FFF$
generated by the functions $\omega\in\Omega\mapsto\omega(t)$, $t\in[0,\infty)$
(i.e., the smallest $\sigma$-algebra making them measurable).
We often consider subsets of $\Omega$ and functions on $\Omega$
(which we often call \emph{functionals})
that are measurable with respect to $\FFF$.

A \emph{random vector} is an $\FFF$-measurable function of the type $\Omega\to\bbbr^d$ for some $d\in\{1,2,\ldots\}$,
and an \emph{extended random variable} is an $\FFF$-measurable function of the type $\Omega\to[-\infty,\infty]$.
A \emph{stopping time} is an extended random variable $\tau$ taking values in $[0,\infty]$ such that,
for all $\omega$ and $\omega'$ in $\Omega$,
\begin{equation*}
  \left(
    \omega|_{[0,\tau(\omega)]}
    =
    \omega'|_{[0,\tau(\omega)]}
  \right)
  \Longrightarrow
  \left(
    \tau(\omega)=\tau(\omega')
  \right),
\end{equation*}
where $f|_A$ stands for the restriction of $f$ to the intersection of $A$ and $f$'s domain.
A random vector $X$ is said to be \emph{$\tau$-measurable},
where $\tau$ is a stopping time,
if,
for all $\omega$ and $\omega'$ in $\Omega$,
\begin{equation*}
  \left(
    \omega|_{[0,\tau(\omega)]}
    =
    \omega'|_{[0,\tau(\omega)]}
  \right)
  \Longrightarrow
  \left(
    X(\omega)=X(\omega')
  \right).
\end{equation*}
A \emph{process} is a function $X:[0,\infty)\times\Omega\to[-\infty,\infty]$.
The process is \emph{adapted} if,
for all $\omega$ and $\omega'$ in $\Omega$ and all $t\in[0,\infty)$,
\begin{equation*}
  \left(
    \omega|_{[0,t]}
    =
    \omega'|_{[0,t]}
  \right)
  \Longrightarrow
  \left(
    X_t(\omega)=X_t(\omega')
  \right).
\end{equation*}
Our definitions are in the spirit of the Galmarino test
(see, e.g., \cite{Dellacherie/Meyer:1978-local}, IV.100);
in later sections they will often make checking that various $[0,\infty]$-valued extended random variables are stopping times
straightforward.
As customary in probability theory,
we will often omit explicit mention of $\omega\in\Omega$
when it is clear from the context.

A \emph{simple trading strategy} $G$ is a pair $((\tau_1,\tau_2,\ldots),(h_1,h_2,\ldots))$,
where:
\begin{itemize}
\item
  $\tau_1\le\tau_2\le\cdots$ is an increasing sequence of stopping times
  such that, for each $\omega\in\Omega$,
  $\lim_{n\to\infty}\tau_n(\omega)=\infty$;
\item
  for each $n=1,2,\ldots$, $h_n$ is a bounded $\tau_{n}$-measurable $\bbbr^{J^*}$-valued random vector.
\end{itemize}
The \emph{simple capital process} $\K^{G,c}$
corresponding to a simple trading strategy $G$
and \emph{initial capital} $c\in\bbbr$ is defined by
\begin{multline}\label{eq:simple-capital-1}
  \K^{G,c}_t(\omega)
  :=
  c
  +
  \sum_{n=1}^{\infty}
  h_n(\omega)\cdot
  \bigl(
    \omega^*(\tau_{n+1}\wedge t)-\omega^*(\tau_n\wedge t)
  \bigr),\\
  t\in[0,\infty),
  \enspace
  \omega\in\Omega,
\end{multline}
where ``$\cdot$'' stands for dot product in $\bbbr^{J^*}$,
$\omega^*:=(S_1,\ldots,S_{J^*})$ consists of the first $J^*$ components of $\omega$,
and the zero terms in the sum are ignored
(which makes the sum finite for each $t$).
We will refer to the $j$th component of $h_n$ as the \emph{bet} on $S_j$
over $(\tau_i,\tau_{i+1}]$.
Notice that (a) simple trading strategies trade only in the first $J^*$ $S_j$s
(corresponding to the traded securities)
but the stopping times and bets can depend on all $J$ $S_j$s,
and (b) expression \eqref{eq:simple-capital-1} implicitly assumes zero interest rates
(this assumption will be removed in Section~\ref{sec:numeraire}).
All simple capital processes have continuous paths and are adapted.

Let us say that a class $\mathcal{C}$ of nonnegative processes is \emph{$\liminf$-closed}
if the process
\begin{equation}\label{eq:PP}
  X_t(\omega)
  :=
  \liminf_{k\to\infty}
  X^k_t(\omega)
\end{equation}
is in $\mathcal{C}$ whenever
each process $X^k$ is in $\mathcal{C}$.
A process $X$ 
is a \emph{nonnegative supermartingale} if it belongs to the smallest $\liminf$-closed class of nonnegative processes
containing all nonnegative simple capital processes.
Intuitively, nonnegative supermartingales are nonnegative capital processes
(in fact, they can lose capital as the approximation is in the sense of $\liminf$).

\begin{remark}\label{rem:supermartingales}
  An equivalent definition of the class $\mathcal{C}$ of nonnegative supermartingales
  can be given using transfinite induction on the countable ordinals $\alpha$
  (see, e.g., \cite{Dellacherie/Meyer:1978-local}, 0.8).
  Namely, define $\mathcal{C}^{\alpha}$ as follows:
  \begin{itemize}
  \item
    $\mathcal{C}^0$ is the class of all nonnegative simple capital processes;
  \item
    for $\alpha>0$,
    $X\in\mathcal{C}^{\alpha}$ if and only if there exists a sequence $X^1,X^2,\ldots$
    of processes in $\mathcal{C}^{<\alpha}:=\cup_{\beta<\alpha}\mathcal{C}^{\beta}$
    such that
    \eqref{eq:PP} holds.
  \end{itemize}
  It is easy to check that the class of all nonnegative supermartingales
  is the union of the nested family $\mathcal{C}^{\alpha}$
  over all countable ordinals $\alpha$.
  The \emph{rank} of a nonnegative supermartingale $X$
  is defined to be the smallest $\alpha$ such that $X\in\mathcal{C}^{\alpha}$;
  in this case we will also say that $X$ is \emph{of rank $\alpha$}.
\end{remark}

We call a subset of $[0,\infty)\times\Omega$ a \emph{property of $t$ and $\omega$}.
We say that a property $E$ of $t$ and $\omega$ holds \emph{quasi-always} (q.a.)\
if there exists a nonnegative supermartingale $X$ such that $X_0=1$ and,
for all $t\in[0,\infty)$ and $\omega\in\Omega$,
\begin{equation*}
  (t,\omega)\notin E
  \Longrightarrow
  X_t(\omega) = \infty.
\end{equation*}
(This implies that the complement of $E$ is evanescent in the sense of its projection onto $\Omega$
having zero game-theoretic upper probability,
as defined in Section~\ref{sec:Dubins-Schwarz} below.)

\begin{lemma}\label{lem:qa}
  If each property in a countable set of properties of $t$ and $\omega$
  holds quasi-always,
  their intersection also holds quasi-always.
\end{lemma}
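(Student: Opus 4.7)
My proof plan is as follows. For each $n = 1,2,\ldots$, let $X^n$ be a nonnegative supermartingale witnessing that the $n$th property $E_n$ holds quasi-always, so $X^n_0 = 1$ and $(t,\omega)\notin E_n$ implies $X^n_t(\omega) = \infty$. The natural candidate to witness that $\bigcap_n E_n$ holds quasi-always is
\begin{equation*}
  X_t(\omega) := \sum_{n=1}^{\infty} 2^{-n} X^n_t(\omega),
\end{equation*}
realized as the monotone (hence $\liminf$) limit of the partial sums $Y^N := \sum_{n=1}^{N} 2^{-n} X^n$. Clearly $X_0 = 1$, and if $(t,\omega) \notin \bigcap_n E_n$ then $(t,\omega) \notin E_n$ for some $n$, so $X^n_t(\omega) = \infty$ and therefore $X_t(\omega) = \infty$. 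So the real content is that $X$ is a nonnegative supermartingale.

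To get that, I first need two closure properties of the class of nonnegative supermartingales, both proved by transfinite induction on the rank introduced in Remark~\ref{rem:supermartingales}.

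\textbf{(i) Closure under multiplication by a nonnegative scalar.} If $X$ is a nonnegative simple capital process coming from strategy $((\tau_n),(h_n))$ and initial capital $c$, then for $\lambda \ge 0$ the process $\lambda X$ is the simple capital process with bets $\lambda h_n$ and initial capital $\lambda c$; this handles rank $0$. The inductive step is immediate from $\lambda \liminf_k X^k = \liminf_k \lambda X^k$.

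\textbf{(ii) Closure under finite addition.} Here the base case is that the sum of two nonnegative simple capital processes is again one: given two strategies with stopping-time sequences $(\tau_n)$ and $(\tau'_n)$, merge them into a common refining sequence $\sigma_1 \le \sigma_2 \le \cdots$ and on each interval $(\sigma_k,\sigma_{k+1}]$ take the bet to be the sum of the two bets active there; the initial capital becomes the sum of the two. The inductive step is again straightforward because $\liminf_k X^k + Y = \liminf_k (X^k + Y)$ when $Y$ is finite (and when $Y = \infty$ both sides are $\infty$). One does a double induction: first fix $Y$ of rank $0$ and induct on the rank of $X$, then induct on the rank of $Y$ using the result just obtained.

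Given (i) and (ii), each partial sum $Y^N$ is a nonnegative supermartingale, and by $\liminf$-closedness so is $X = \liminf_N Y^N$. This completes the argument.

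The only potentially finicky step is (ii), specifically verifying that two simple capital processes can be merged into a single one: one must check that the merged $(\sigma_k)$ is a nondecreasing sequence of stopping times tending to $\infty$ and that the merged bets are $\sigma_k$-measurable and bounded. These verifications are routine from the Galmarino-style definitions adopted in the paper, but they are the only place where actual structure of simple strategies (rather than abstract closure properties) is used.
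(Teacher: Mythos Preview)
Your proposal is correct and follows essentially the same approach as the paper: the paper's one-line proof simply asserts that ``a countable convex mixture of nonnegative supermartingales is a nonnegative supermartingale,'' which is exactly the $X = \sum_n 2^{-n} X^n$ you construct. You have supplied the details (closure under scalars and finite sums via transfinite induction on rank, then $\liminf$-closure for the countable sum) that the paper leaves implicit.
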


\begin{proof}
  It suffices to notice that a countable convex mixture of nonnegative supermartingales is a nonnegative supermartingale.
\end{proof}

A sequence of processes $X^k$ converges to a process $X$ \emph{uniformly on compacts quasi-always (ucqa)}
if the property
\begin{equation}\label{eq:property}
  \lim_{k\to\infty}
  \sup_{s\in[0,t]}
  \left|
    X^k_s(\omega)-X_s(\omega)
  \right|
  =
  0
\end{equation}
of $t$ and $\omega$ holds quasi-always.
If continuous $X^k$ converge ucqa to $X$,
we can consider the limit $X$ to be continuous as well;
to make this precise, we will extend the notion of a continuous process.

Let $\cemetery$ (the \emph{cemetery state}) be any element outside the real line $\bbbr$.
Adapted processes $X:[0,\infty)\times\Omega\to\bbbr\cup\{\cemetery\}$
are defined analogously to adapted $[-\infty,\infty]$-valued processes.
Let us say that an adapted process $X:[0,\infty)\times\Omega\to\bbbr\cup\{\cemetery\}$
is a \emph{continuous process} if
\begin{itemize}
\item
  it takes values in $\bbbr$ quasi-always;
\item
  for each $\omega\in\Omega$,
  \begin{itemize}
  \item
    the set of $t\in[0,\infty)$ for which $X_t(\omega)\in\bbbr$
    contains 0 and is
    connected;
  \item
    $X_t(\omega)$ is continuous as function of $t$ in this set.
  \end{itemize}
\end{itemize}
(Even though an object that qualifies as a continuous process by this definition does not qualify as a process by our earlier definition,
we will sometimes call it a process when there is no danger of confusion.)
The \emph{effective domain} of a continuous process $X$ is defined to be
\[
  \dom X := \{(t,\omega) \st X_t(\omega)\in\bbbr\}.
\]
A class $\mathcal{C}$ of continuous processes is \emph{$\lim$-closed}
if it contains every continuous process $X$
for which there exists a sequence $X^k$ of continuous processes in $\mathcal{C}$ such that:
\begin{itemize}
\item
  $\dom X\subseteq\dom X^k$ for each $k$;
\item
  \eqref{eq:property} holds for each $(t,\omega)\in\dom X$.
\end{itemize}
A continuous process is a \emph{continuous martingale}
if it is an element of the smallest $\lim$-closed class of continuous processes
that contains all simple capital processes.
The \emph{rank} of a continuous martingale is defined as in Remark~\ref{rem:supermartingales}.

The following lemma will be useful in establishing that various specific functions $\tau:\Omega\to[0,\infty]$
are stopping times.

\begin{lemma}\label{lem:Sigma}
  For any continuous process $X$, the function $\Sigma_X:\Omega\to[0,\infty]$ defined by
  \[
    \Sigma_X(\omega)
    :=
    \inf\{t\in[0,\infty)\st X_t(\omega)=\cemetery\}
  \]
  is measurable.
\end{lemma}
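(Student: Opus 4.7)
The plan is to reduce measurability of $\Sigma_X$ to a countable operation at rational times by exploiting the path-wise structure of the cemetery set $N(\omega) := \{t \in [0,\infty) : X_t(\omega) = \cemetery\}$. The defining conditions of a continuous process force the effective-domain slice $\{t : X_t(\omega) \in \bbbr\}$ to be a connected subset of $[0,\infty)$ containing $0$, i.e., an interval of the form $[0,T]$, $[0,T)$, or $[0,\infty)$; so $N(\omega)$ equals $(T,\infty)$, $[T,\infty)$, or $\emptyset$ respectively, with $T = \Sigma_X(\omega)$ in the first two cases.

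Using this, I would prove that for every $a \in (0,\infty]$,
\begin{equation*}
  \{\omega : \Sigma_X(\omega) < a\}
  =
  \bigcup_{q \in \bbbq \cap [0,a)}
  \{\omega : X_q(\omega) = \cemetery\}.
\end{equation*}
For the inclusion $\subseteq$, $\Sigma_X(\omega) < a$ forces the nonempty interval $(\Sigma_X(\omega), a)$ into $N(\omega)$, so this interval contains a rational $q$ with $X_q(\omega) = \cemetery$; for $\supseteq$, $X_q(\omega) = \cemetery$ with $q < a$ gives $\Sigma_X(\omega) \le q < a$ directly from the infimum definition of $\Sigma_X$. Each set on the right is $\FFF$-measurable, hence so is the countable union; combined with the trivial $\{\Sigma_X < 0\} = \emptyset$, this shows $\{\Sigma_X < a\} \in \FFF$ for every $a \in [0,\infty]$, which is the required measurability.

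The only delicate point is the $\FFF$-measurability of $\omega \mapsto X_q(\omega)$ for each fixed rational $q$. The paper defines an adapted process by a Galmarino-type dependence condition rather than by outright measurability, so one must either appeal to the fact that on the Polish sample space $\Omega = C[0,\infty)^J$ this dependence already upgrades to Borel measurability (for the natural structure on $\bbbr\cup\{\cemetery\}$), or note that the concrete continuous processes considered here are $\FFF$-measurable in $\omega$ for each $t$ anyway, since they arise by $\lim$-closure from the manifestly measurable simple capital processes.
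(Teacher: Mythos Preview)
Your proof is correct and follows essentially the same idea as the paper's: both exploit the interval structure of the cemetery set to express a (sub)level set of $\Sigma_X$ as a countable Boolean combination of sets $\{X_s=\cemetery\}$ at rational times---the paper uses $\{\Sigma_X\le t\}=\bigcap_{\epsilon\in\bbbq_{>0}}\{X_{t+\epsilon}=\cemetery\}$, while you use the dual formulation $\{\Sigma_X<a\}=\bigcup_{q\in\bbbq\cap[0,a)}\{X_q=\cemetery\}$. Your final paragraph on the $\FFF$-measurability of $\omega\mapsto X_q(\omega)$ is a legitimate point that the paper's proof passes over in silence, so in that respect you are being more careful than the original.
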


\begin{proof}
  It suffices to notice that, for each $t\in[0,\infty)$,
  \[
    \{\omega\st\Sigma_X(\omega)\le t\}
    =
    \bigcap_{\epsilon}
    \:
    \{\omega\st X_{t+\epsilon}(\omega)=\cemetery\},
  \]
  where $\epsilon$ ranges over the positive rational numbers.
\end{proof}

Lemma~\ref{lem:Sigma} does not claim that $\Sigma_X$ itself is a stopping time
(it is not in many interesting cases).

A continuous process $A$ is a \emph{finite variation continuous process}
if $A_0(\omega)=0$ and the total variation of the function $s\in[0,t]\mapsto A_s(\omega)$ is finite
for all $(t,\omega)\in\dom A$.
A continuous process $X$ is a \emph{continuous semimartingale}
if there exist a continuous martingale $Y$ and a finite variation continuous process $A$
such that $\dom X=\dom Y=\dom A$ and $X=Y+A$.
We will call such a decomposition $X=Y+A$ of $X$ the \emph{standard decomposition},
where the article ``the'' is justified by its uniqueness:
see Corollary~\ref{cor:decomposition} below.

\Subsection{Comparison with measure-theoretic probability}

The motivation for our terminology is the analogy with measure-theoretic probability.
In this subsection we suppose that $S_1,\ldots,S_{J^*}$ are continuous local martingales
on a measure-theoretic probability space with a given filtration,
whereas $S_{J^*+1},\ldots,S_J$ are continuous adapted stochastic processes.
Each simple capital process is a local martingale.
Since each nonnegative local martingale is a supermartingale
(\cite{Revuz/Yor:1999}, p.~123),
nonnegative simple capital processes are supermartingales.
By the Fatou lemma, $\liminf_k X^k$ is a supermartingale
whenever $X^k$ are nonnegative supermartingales:
\begin{equation*}
  \Expect
  \left(
    \liminf_k X^k_t\st\FFF_s
  \right)
  \le
  \liminf_k
  \Expect(X^k_t\st\FFF_s)
  \le
  \liminf_k X^k_s
  \quad
  \text{a.s.},
\end{equation*}
where $0\le s<t$.
Therefore, our definition gives a subset of the set of all nonnegative measure-theoretic supermartingales.
(We are using the definitions of measure-theoretic supermartingales and martingales
that do not impose any continuity conditions, as in \cite{Revuz/Yor:1999}, Definition~II.1.1.)

Let us now check that continuous martingales $X$, as defined above (but with $\cemetery$ replaced by, say, 0),
are continuous local martingales in the measure-theoretic setting.
Since simple capital processes are continuous local martingales
and a limit of a sequence of continuous local martingales that converge
in probability uniformly on compact time intervals
is always a continuous local martingale (\cite{Cherny:2006}, Theorem 3.1),
it suffices to apply transfinite induction on the rank of~$X$.
Notice that we can make all sample paths of $X$ continuous by changing it on a set of measure zero.

\section{Conservatism of continuous martingales}

In this section we derive a key technical result of this paper showing that adding a continuous martingale to our market
as a new traded security (in addition to the basic price paths $S_1,\ldots,S_{J^*}$)
is its ``conservative extension,'' to use a logical term:
it does not increase the supply of nonnegative supermartingales and continuous martingales.
But we start from a simpler property:
adding a continuous process as a new piece of side information
(in addition to the basic information paths $S_{J^*+1},\ldots,S_{J}$)
is a conservative extension.
This property is stated as Theorem~\ref{thm:side-conservative-super} for nonnegative supermartingales
and as Theorem~\ref{thm:side-conservative} for continuous martingales.

\begin{theorem}\label{thm:side-conservative-super}
  If $H$ is a continuous process,
  adding $H$ to the market as side information
  does not add any new nonnegative supermartingales.
\end{theorem}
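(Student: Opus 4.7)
The plan is transfinite induction on the rank (Remark~\ref{rem:supermartingales}) of the nonnegative supermartingale $X'$ in the extended market. Let $\Omega' := C[0,\infty)^{J+1}$ denote the extended sample space, with the $(J+1)$-st coordinate standing for the path of $H$; on the set $\{\omega : \Sigma_H(\omega)=\infty\}$ define the embedding $\iota(\omega) := (\omega, H(\cdot,\omega))\in\Omega'$, and aim to show that the pullback $X'\circ\iota$, suitably extended, is a nonnegative supermartingale on $\Omega$.

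For the base case, take a nonnegative simple capital process $\K^{G',c}$ on $\Omega'$ with $G' = ((\tau'_n),(h'_n))$, and pull the strategy back by setting $\tau_n := \tau'_n\circ\iota$ and $h_n := h'_n\circ\iota$. Because $H$ is adapted on $\Omega$, the restriction $\iota(\omega)|_{[0,t]}$ is determined by $\omega|_{[0,t]}$, so the Galmarino-type conditions characterizing stopping times, $\tau_n$-measurability, boundedness of $h_n$, and $\tau_n\to\infty$ all transfer from $\Omega'$ to $\Omega$. Since $\iota$ only alters the $(J+1)$-st coordinate and betting is restricted to $S_1,\ldots,S_{J^*}$, the sum in \eqref{eq:simple-capital-1} is unaffected and $\K^{G',c}\circ\iota = \K^{G,c}$ is a nonnegative simple capital process on $\Omega$. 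The inductive step is immediate: if $X'=\liminf_k X'^k$ on $\Omega'$ and each $X'^k\circ\iota$ is a nonnegative supermartingale on $\Omega$, then $X'\circ\iota=\liminf_k(X'^k\circ\iota)$ is one too by the defining $\liminf$-closure.

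The main obstacle is that $\iota$ is only defined on $\{\Sigma_H=\infty\}$, whereas a strategy in $\Omega'$ may exploit the entire path of the $(J+1)$-st coordinate, so its pullback has no intrinsic meaning on the complementary set. I plan to resolve this by localization. For each $m$ let $\sigma_m := \inf\{t\ge 0 : |H_t|\ge m\}\wedge m$; using Lemma~\ref{lem:Sigma} together with a Galmarino argument one checks that $\sigma_m$ is a stopping time with $\sigma_m\le m$, below which $H$ is real-valued and bounded by $m$, so the frozen process $H^{(m)}_t(\omega) := H_{t\wedge\sigma_m(\omega)}(\omega)$ belongs to $C[0,\infty)$ for every $\omega$ and gives a total embedding $\iota_m:\Omega\to\Omega'$. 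Running the preceding induction with $H^{(m)}$ in place of $H$ produces a nonnegative supermartingale $X^{(m)}$ on $\Omega$ agreeing with $X'\circ\iota$ whenever $\sigma_m(\omega)>t$ and $\Sigma_H(\omega)=\infty$ (by adaptedness of $X'$ in $\Omega'$).

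Finally, since $H$ is real-valued quasi-always the property $\sigma_m\to\infty$ holds quasi-always, so combining the $X^{(m)}$ via the $\liminf$-closure and adding a suitable multiple of a nonnegative supermartingale witnessing the quasi-always property (to dominate the exceptional paths with $\Sigma_H<\infty$, using Lemma~\ref{lem:qa}) yields a single nonnegative supermartingale on $\Omega$ that coincides with $X'\circ\iota$ on $\{\Sigma_H=\infty\}$, which is the desired conclusion. The delicate point I expect to spend most care on is verifying the Galmarino condition for the $\sigma_m$ despite the possible presence of the cemetery state and checking that the pulled-back strategy $G$ really is a \emph{simple} trading strategy with bets $\tau_n$-measurable in the sense defined on $\Omega$, not only on $\Omega'$.
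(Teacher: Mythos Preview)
Your overall architecture---transfinite induction on rank, reduction to simple capital processes, pullback of the strategy---matches the paper's. The divergence, and the gap, is in how you handle the cemetery state.

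The claim that $H^{(m)}_t(\omega):=H_{t\wedge\sigma_m(\omega)}(\omega)$ lies in $C[0,\infty)$ for \emph{every} $\omega$ is false. Nothing in the definition of a continuous process forces $|H|$ to blow up before hitting $\cemetery$: the real-valued interval may be $[0,\Sigma_H(\omega))$ with $H$ bounded there (e.g.\ $H_t=\sin(1/(\Sigma_H-t))$ for $t<\Sigma_H$). For such $\omega$ and any $m$ exceeding that bound, your $\sigma_m(\omega)=m$ overshoots $\Sigma_H(\omega)$, and $H^{(m)}$ takes the value $\cemetery$ on $[\Sigma_H(\omega),m]$. So $\iota_m$ is not total, and the ``preceding induction with $H^{(m)}$ in place of $H$'' does not run. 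Obvious repairs fail: including ``$H_t=\cemetery$'' in the stopping condition forces $\sigma_m\le\Sigma_H$, but then $H_{\sigma_m}$ may itself be $\cemetery$ (when the real-valued interval is open at $\Sigma_H$), and there need be no left limit to freeze at; and $\Sigma_H$ is not a stopping time in general (the paper notes this right after Lemma~\ref{lem:Sigma}), so you cannot simply stop at $\Sigma_H$.

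The paper sidesteps the problem by never attempting a total embedding. It pulls back the stopping times directly: $\tau'_n(\omega)$ equals $\tau_n(\omega,H')$ for any continuous $H'$ agreeing with $H(\omega)$ up to a time $t\ge\tau_n(\omega,H')$, and is set to $\infty$ if no such $t$ exists (i.e.\ once $H$ has hit $\cemetery$ the strategy freezes on its last bet). The resulting simple capital process $Y''$ on $\Omega$ agrees with the target $Y'$ on $\dom H$ but can go negative outside it, so the paper stops $Y''$ at zero to obtain a nonnegative simple capital process $Y'''$ and then recovers $Y'$ as $\lim_{k}(Y'''+X/k)$, where $X$ is a nonnegative supermartingale witnessing that $\dom H$ holds quasi-always. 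This last device does exactly what your final paragraph is reaching for, but without needing any $\iota_m$ to be globally defined.
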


Before proving Theorem~\ref{thm:side-conservative-super},
we will give its more detailed and explicit statement.
Let $\Omega':=C[0,\infty)^{J+1}$ be the sample space of the new market $(S_1,\ldots,S_J,H')$,
where $H':[0,\infty)\to\bbbr$ is the new information path.
Let $Y$ be a nonnegative supermartingale in the new market.
Any expression of the form $Y_t(z)$,
where $t\in[0,\infty)$ and $z:[0,\infty)\to(\bbbr\cup\{\cemetery\})^{J+1}$,
is understood to be $\infty$ if $z$ ever takes a value containing $\cemetery$ over $[0,t]$,
and to be $Y_t(z')$ otherwise, where $z'$ is any element of $\Omega'$ such that $z'|_{[0,t]}=z|_{[0,t]}$
(we will only be interested in cases where there is no dependence on such $z'$).
The theorem says that
\begin{multline}\label{eq:Y}
  Y'_t(\omega)
  =
  Y'_t(S_1,\ldots,S_J)\\
  :=
  Y_t(S_1,\ldots,S_J,H(S_1,\ldots,S_J))
  =
  Y_t(\omega,H(\omega)),
\end{multline}
where $H(S_1,\ldots,S_J)$ is the function $t\in[0,\infty)\mapsto H_t(S_1,\ldots,S_J)$,
is a nonnegative supermartingale (in the old market).

\begin{proof}[Proof of Theorem~\ref{thm:side-conservative-super}]
  Suppose $Y$ is a nonnegative supermartingale in the new market $(S_1,\ldots,S_{J},H')$.
  It suffices to consider the case where $Y$ is a simple capital process.
  Indeed, for the other nonnegative supermartingales $Y$ we can use transfinite induction:
  if
  \begin{equation}\label{eq:induction-1}
    Y_t(S_1,\ldots,S_J,H')
    =
    Y_t(\omega,H')
    =
    \liminf_{k\to\infty}
    Y^k_t(\omega,H')
  \end{equation}
  for some nonnegative supermartingales
  $
    Y^k_t=Y^k_t(\omega,H')
  $
  of lower ranks,
  the inductive assumption will imply that
  \begin{equation}\label{eq:induction-2}
    Y'_t(\omega)
    :=
    Y_t
    \bigl(
      \omega,
      H(\omega)
    \bigr)
    =
    \liminf_{k\to\infty}
    Y_t^k
    \bigl(
      \omega,
      H(\omega)
    \bigr)
  \end{equation}
  is a nonnegative supermartingale in the old market.
  (The equality $=$ in~\eqref{eq:induction-2} is easy to check both in the case $(t,\omega)\in\dom H$,
  when it follows from \eqref{eq:induction-1},
  and in the case $(t,\omega)\notin\dom H$,
  when that equality becomes $\infty=\infty$.)

  Therefore, we are given a nonnegative simple capital process $Y$ in the new market,
  and our goal is to prove that \eqref{eq:Y} is a nonnegative supermartingale in the old market.
  Let $(\tau_1,\tau_2,\ldots)$ and $(h_1,h_2,\ldots)$ be the corresponding stopping times and bets
  ($\bbbr^{J^*}$-valued random vectors).
  Spelling out the dependence of $H$ on the basic paths,
  we can consider $\tau_n$ and $h_n$ to be stopping times and random vectors
  on the sample space $\Omega$ of the old market.
  Namely, we define the counterparts of $\tau_n$ and $h_n$ in the old market as follows:
  \begin{itemize}
  \item
    if there exist $t\in[0,\infty)$ and $H'\in C[0,\infty)$
    such that
    \begin{equation}\label{eq:condition}
      H'|_{[0,t]} = H(\omega)|_{[0,t]}
      \text{ and }
      \tau_n(\omega,H')\le t,
    \end{equation}
    set $\tau'_n(\omega):=\tau_n(\omega,H')$;
  \item
    otherwise, set $\tau'_n(\omega):=\infty$;
  \item
    in any case, set
    \[
      h'_n(\omega)
      :=
      \begin{cases}
        h_n(\omega,H') & \text{if $\tau'_n(\omega)<\infty$}\\
        0 & \text{otherwise},
      \end{cases}
    \]
    where $H'\in C[0,\infty)$ satisfies \eqref{eq:condition} for some $t$.
  \end{itemize}
  There is no dependence on the choice of $H'$,
  the $\tau'_n$ are stopping times, and the $h_n$ are $\tau_n$-measurable random variables in the old market.
  All these statements are trivial apart from the measurability of $\tau'_n$,
  but the latter
  can be easily deduced from the fact that our $\sigma$-algebra on $\Omega$
  coincides with the Borel $\sigma$-algebra for the uniform topology on the basic paths.

  Define $Y''$ as the simple capital process in the old market corresponding to the initial capital $Y_0$
  and the simple trading strategy $((\tau'_1,\tau'_2,\ldots),(h'_1,h'_2,\ldots))$.
  Define $Y'''$ as $Y''$ stopped when it reaches 0 (if it ever does;
  notice that $Y''$ is not guaranteed to be nonnegative even though $Y$ is nonnegative).
  Let $X$ be any nonnegative supermartingale with $X_0=1$ that is infinite outside $\dom H$.
  It remains to notice that $Y'$ is the limit of $Y'''+X/k$ as $k\to\infty$.
\end{proof}

\begin{theorem}\label{thm:side-conservative}
  If $H$ is a continuous process,
  adding $H$ to the market as side information
  does not add new continuous martingales.
\end{theorem}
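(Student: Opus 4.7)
The plan is to mimic the transfinite induction used in the proof of Theorem~\ref{thm:side-conservative-super}, but working with the continuous-martingale hierarchy $\mathcal{C}^\alpha$ (built from simple capital processes under $\lim$-closure) rather than the nonnegative-supermartingale hierarchy (built under $\liminf$-closure). Given a continuous martingale $Y$ in the new market, I would define $Y'$ analogously to~\eqref{eq:Y}: set $Y'_t(\omega) := Y_t(\omega,H(\omega))$ when $(t,\omega)\in\dom H$ and $(t,(\omega,H(\omega)))\in\dom Y$, and $Y'_t(\omega) := \cemetery$ otherwise. The goal is to show by transfinite induction on the rank of $Y$ that $Y'$ is a continuous martingale in the old market.

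For the base case (rank~$0$), $Y$ is a simple capital process in the new market, and I would reuse verbatim the construction of $\tau'_n$, $h'_n$, and the old-market simple capital process $Y''$ from the proof of Theorem~\ref{thm:side-conservative-super}. On $\dom H$ one has $Y' = Y''$. First I would check that $Y'$ is a continuous process: per $\omega$, its lifetime equals $\dom H$ sliced at $\omega$, which contains $0$ and is connected; continuity on $\dom H$ is inherited from $Y''$; and quasi-always finiteness follows from $H$ being a continuous process. Then the constant sequence $Y^{(k)} := Y''$ of simple capital processes in the old market trivially satisfies $\dom Y' \subseteq \dom Y^{(k)}$ and~\eqref{eq:property} on $\dom Y'$, placing $Y'$ in the smallest $\lim$-closed class containing the simple capital processes of the old market.

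For the inductive step, suppose $Y\in\mathcal{C}^\alpha$ in the new market, witnessed by processes $Y^k$ of lower rank with $\dom Y \subseteq \dom Y^k$ and~\eqref{eq:property} holding on $\dom Y$. The inductive hypothesis furnishes $(Y^k)'$ as continuous martingales in the old market. The inclusion $\dom Y' \subseteq \dom (Y^k)'$ is immediate, and for $(t,\omega)\in\dom Y'$ the uniform convergence of $(Y^k)'$ to $Y'$ on $[0,t]$ is obtained by evaluating the corresponding convergence in the new market at $H' := H(\omega)$. The remaining obligation is to verify that $Y'$ is itself a continuous process; this reduces to quasi-always finiteness of $Y'$, which I would handle by combining a witness nonnegative supermartingale $X_H$ in the old market (infinite outside $\dom H$) with a witness $X_Y$ in the new market (infinite outside $\dom Y$) transported into the old market via Theorem~\ref{thm:side-conservative-super}, then taking a convex mixture and invoking Lemma~\ref{lem:qa}.

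The main obstacle I expect is precisely this last point. The proof of Theorem~\ref{thm:side-conservative-super} closed its inductive step with the slick trick of taking $\liminf$ of $Y''' + X/k$, exploiting $\liminf$-closure, and no such shortcut is available in the continuous-martingale setting. Instead, at each step of the induction one must verify directly the three clauses of the definition of continuous process for the candidate $Y'$, relying on Theorem~\ref{thm:side-conservative-super} itself to transfer the quasi-always finiteness witness for $\dom Y$ from the new market into the old market so that it can be joined with the finiteness witness for $\dom H$.
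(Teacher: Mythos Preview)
Your proposal is correct and follows essentially the same route as the paper's proof: transfinite induction on the rank of $Y$, the base case handled via the simple capital process $Y''$ from the proof of Theorem~\ref{thm:side-conservative-super}, and the quasi-always finiteness of $\dom Y'$ established by transporting a witness for $\dom Y$ through Theorem~\ref{thm:side-conservative-super}. One minor simplification the paper makes: by the convention that $X_t(\omega,H(\omega))=\infty$ whenever $H(\omega)$ has already hit $\cemetery$ by time $t$, the single transported witness for $\dom Y$ is already infinite outside $\dom H$, so your separate witness $X_H$ and the convex-mixture step via Lemma~\ref{lem:qa} are unnecessary.
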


To give a more explicit statement of Theorem~\ref{thm:side-conservative},
we now define an expression of the form $Y_t(z)$,
where $Y$ is a continuous martingale in the new market,
$t\in[0,\infty)$, and $z:[0,\infty)\to(\bbbr\cup\{\cemetery\})^{J+1}$,
as $\cemetery$ if $z$ takes value outside $\bbbr^{J+1}$ over $[0,t]$,
and as $Y_t(z')$ otherwise, where $z'$ is any element of $\Omega'$ such that $z'|_{[0,t]}=z|_{[0,t]}$.
The theorem says that \eqref{eq:Y} is a continuous martingale in the old market.

\begin{proof}[Proof of Theorem~\ref{thm:side-conservative}]
  The proof will go along the lines of the proof of Theorem~\ref{thm:side-conservative-super}.
  Suppose $Y$ is a continuous martingale in the new market $(S_1,\ldots,S_{J},H')$.
  First we reduce the problem to the case where $Y$ is a simple capital process.
  The inductive step \eqref{eq:induction-1}--\eqref{eq:induction-2} now is:
  if
  \begin{equation}\label{eq:induction-1a}
    \lim_{k\to\infty}
    \sup_{s\in[0,t]}
    \left|
      Y^k_s(\omega,H')
      -
      Y_s(\omega,H')
    \right|
    =
    0
  \end{equation}
  for all $(t,(\omega,H'))\in\dom Y$ for some continuous martingale $Y$
  and some continuous martingales $Y^k$ of lower ranks than $Y$ in the new market,
  the inductive assumption
  ($Y^k_t(\omega,H(\omega))$ being a continuous martingale in the old market for all $k$)
  will imply that $Y'_t(\omega):=Y_t(\omega,H(\omega))$
  is a continuous martingale in the old market.
  To establish the last statement it suffices to prove that
  \begin{equation}\label{eq:induction-2a}
    \lim_{k\to\infty}
    \sup_{s\in[0,t]}
    \left|
      Y_s^k(\omega,H(\omega))
      -
      Y_s(\omega,H(\omega))
    \right|
    =
    0
  \end{equation}
  holds for all $(t,\omega)\in\dom Y'\subseteq\dom H$ and that $\dom Y'$ holds quasi-always.
  For $(t,\omega)\in\dom H$, $(t,\omega)\in\dom Y'$ means that
  $(t,(\omega,H'))\in\dom Y$ for any $H'\in C[0,\infty)$ such that $H'|_{[0,t]}=H(\omega)|_{[0,t]}$,
  and so the equality \eqref{eq:induction-2a} follows from \eqref{eq:induction-1a}.
  Next we prove that $\dom Y'$ holds quasi-always.
  Let $X$ be a nonnegative supermartingale in the new market such that $X_0=1$
  and $X_t(\omega,H')=\infty$ for all $(t,(\omega,H'))\notin\dom Y$.
  By Theorem~\ref{thm:side-conservative-super}, $X_t(\omega,H(\omega))$ is a nonnegative supermartingale in the old market,
  and this nonnegative supermartingale witnesses that $\dom Y'$ holds quasi-always.

  It remains to consider the case where $Y$ is a simple capital process,
  and our goal is to prove that \eqref{eq:Y} is a continuous martingale in the old market.
  This follows from the effective domain of \eqref{eq:Y} being $\dom H$
  and \eqref{eq:Y} being equal inside $\dom H$ to the simple capital process $Y''$ in the old market
  constructed in the proof of Theorem~\ref{thm:side-conservative-super}
  (the construction also works without the assumption that $Y$ is nonnegative).
\end{proof}

Now we prove the analogues of Theorems~\ref{thm:side-conservative-super} and~\ref{thm:side-conservative}
for the price paths of traded securities;
Theorem~\ref{thm:conservative-super} covers nonnegative supermartingales,
and Theorem~\ref{thm:conservative} is about continuous martingales.

\begin{theorem}\label{thm:conservative-super}
  Let $X$ be a continuous martingale.
  Consider the extended market in which $X$ is traded as a new security.
  Any nonnegative supermartingale $Y$ in the new market
  is a nonnegative supermartingale in the old market.
\end{theorem}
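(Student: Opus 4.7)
The plan is to carry out two nested transfinite inductions. The outer induction is on the rank of $Y$ as a nonnegative supermartingale in the extended market: if $Y=\liminf_n Y^n$ with the $Y^n$ of strictly smaller rank, the outer inductive hypothesis identifies each $Y^n$ as a nonnegative supermartingale in the old market, and $\liminf$-closedness of that class closes the step. This reduces the theorem to the base case in which $Y$ is a nonnegative simple capital process in the new market, defined by a strategy $((\tau_n),(h_n))$ with bounded $\bbbr^{J^*+1}$-valued bets $h_n=(g_n,\lambda_n)$ trading against $(S_1,\dots,S_{J^*},X)$. I decompose $Y_t = c + A_t + B_t$ on $\dom X$, where
\[
  A_t := \sum_n g_n\cdot\bigl(\omega^*(\tau_{n+1}\wedge t)-\omega^*(\tau_n\wedge t)\bigr),
  \qquad
  B_t := \sum_n \lambda_n\bigl(X_{\tau_{n+1}\wedge t}-X_{\tau_n\wedge t}\bigr);
\]
$A$ is already a simple capital process in the old market with initial capital~$0$.

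The inner induction, on the rank of $X$ as a continuous martingale, is used to show that $B$ is a continuous martingale in the old market. In the base case, $X$ is a simple capital process, so each increment $X_{\tau_{n+1}\wedge t}-X_{\tau_n\wedge t}$ telescopes across the common refinement of the $\tau_n$ with $X$'s own partition, exhibiting $B$ as a simple capital process in $\omega^*$. For the inductive step, $X=\lim_\ell X^\ell$ ucqa with the $X^\ell$ of smaller rank; keeping $\tau_n$ and $\lambda_n$ fixed and replacing $X$ by $X^\ell$ gives processes $B^\ell$ which are continuous martingales in the old market by the inner inductive hypothesis applied to $X^\ell$. Boundedness of $\lambda_n$, ucqa convergence $X^\ell\to X$, and the fact that only finitely many terms in the defining sum are nonzero up to each finite $t$ yield $B^\ell\to B$ ucqa, and $\lim$-closedness closes the step. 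Adding the simple capital process $A$ and the constant $c$ (an operation preserving the class of continuous martingales by a routine rank induction) shows that $Y$ itself is a continuous martingale in the old market.

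The final---and hardest---step is to promote the nonnegative continuous martingale $Y$ to a genuine nonnegative supermartingale in the old market. For this I would adapt the witness construction from the end of the proof of Theorem~\ref{thm:side-conservative-super}: fix a nonnegative supermartingale $\tilde X$ in the old market with $\tilde X_0=1$ and $\tilde X_t(\omega)=\infty$ for $(t,\omega)\notin\dom X$ (available because $X$ takes real values quasi-always), stop each $c+A+B^\ell$ at its first zero to obtain a nonnegative simple capital process $Z^\ell$, and assemble
\[
  Y = \liminf_k \liminf_\ell \bigl( Z^\ell + \tilde X/k \bigr)
  \qquad\text{quasi-always.}
\]
The identification relies on the observation that, because $Y$ is a nonnegative simple capital process in the new market, it cannot recover after touching $0$ on $(\omega,X(\omega))$---otherwise a choice of path branching off after that time would drive the new-market capital strictly below $0$---so stopping at zero is harmless on $\dom X$; outside $\dom X$ the witness $\tilde X/k$ blows up as $k\to\infty$, recovering the $\infty$ convention for $Y$ there. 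The main obstacle is exactly this bookkeeping: the approximations $c+A+B^\ell$ need not be nonnegative, and one must verify, using Lemma~\ref{lem:qa} and $\liminf$-closedness, that the double $\liminf$ collapses to a single nonnegative supermartingale equal to $Y$ quasi-always.
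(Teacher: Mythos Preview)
Your outer induction on the rank of $Y$ and the reduction to $Y$ a nonnegative simple capital process in the new market are exactly as in the paper. Your inner induction showing that $B$ is a continuous martingale in the old market is also sound in outline, though you silently assume that the stopping times $\tau_n$ and bets $g_n,\lambda_n$---which a priori are defined on $\Omega'$---become valid stopping times and $\tau_n$-measurable random vectors in the old market after the substitution $X'=X(\omega)$. This is precisely the content of Theorem~\ref{thm:side-conservative-super} (or rather its proof), which the paper invokes explicitly to ``ignore'' the information role of $X$; without it your claim that ``$A$ is already a simple capital process in the old market'' is unjustified.

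The genuine gap is in the final promotion step. You assert that stopping $c+A+B^\ell$ at its first zero yields a nonnegative \emph{simple capital process} $Z^\ell$. That is correct only when $X^\ell$ has rank~$0$; for $X^\ell$ of higher rank your inner hypothesis tells you only that $B^\ell$ is a continuous martingale, so $c+A+B^\ell$ stopped at zero is a nonnegative continuous martingale---not a simple capital process, and not obviously a nonnegative supermartingale. Your double $\liminf$ is therefore not a $\liminf$ of nonnegative supermartingales, and the class is not closed under it. Worse, passing from ``nonnegative continuous martingale'' to ``nonnegative supermartingale'' is exactly the promotion you are trying to achieve, so the argument loops back on itself as soon as the rank of $X$ exceeds~$1$.

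The paper avoids this trap by a different choice of approximant in the inner induction. Rather than freezing the information role at $X$ and varying only the trading role through $X^\ell$, it substitutes $X^k$ into \emph{both} roles at once, setting $Y^k_t(\omega):=Y_t(X^k(\omega),\omega)$. The point is that $Y$ is nonnegative on all of $\Omega'$, not merely on paths of the form $(X(\omega),\omega)$, so each $Y^k$ is automatically nonnegative; the inner inductive hypothesis (the theorem applied to $X^k$ of smaller rank) then delivers $Y^k$ as a nonnegative supermartingale outright. One finishes with a single $\liminf_k(Y^k+\tilde X/k)$, never leaving the class of nonnegative supermartingales, and no promotion step is required.
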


Let $\Omega':=C[0,\infty)^{J+1}$ be the sample space of the new market $(X',S_1,\ldots,S_J)$,
where $X':[0,\infty)\to\bbbr$ is the price path of the new security
(we always list price paths before information paths).
Let $Y$ be a nonnegative supermartingale in the new market.
Our understanding of expressions of the form $Y_t(z)$ is the same
as for Theorem~\ref{thm:side-conservative-super}.
Theorem~\ref{thm:conservative-super} says that
\begin{multline}\label{eq:Y-bis}
  Y'_t(\omega)
  =
  Y'_t(S_1,\ldots,S_J)\\
  :=
  Y_t(X(S_1,\ldots,S_J),S_1,\ldots,S_J)
  =
  Y_t(X(\omega),\omega)
\end{multline}
(cf.\ \eqref{eq:Y}) is a nonnegative supermartingale in the old market.

\begin{proof}[Proof of Theorem~\ref{thm:conservative-super}]
  Suppose $Y$ is a nonnegative supermartingale in the new market $(X',S_1,\ldots,S_{J})$.
  The argument given in Theorem~\ref{thm:side-conservative-super}
  (cf.\ \eqref{eq:induction-1}--\eqref{eq:induction-2})
  shows that it suffices to consider the case where $Y$ is a simple capital process.
  
  Therefore, we are given a nonnegative simple capital process $Y$ in the new market,
  and our goal is to prove that it is a nonnegative supermartingale in the old market.
  The continuous martingale $X$ enters the picture in two places:
  first, it is used when defining the stopping times and bets,
  and second through the increments of $X$ entering the increments of the capital.
  (The basic price paths $S_j$, $j=1,\ldots,J^*$, play both roles,
  while the basic information paths $S_j$, $j=J^*+1,\ldots,J$, only play the first role.)
  We can use different continuous martingales $X'$ and $X''$ for these two roles
  ($X'$ for the first and $X''$ for the second),
  and we prove the required statement for any $X'$ and $X''$, without assuming $X'=X''$.
  Theorem~\ref{thm:side-conservative-super} says that adding $X'$ to the market
  does not change the class of nonnegative supermartingales;
  therefore, we will ignore $X'$.
  The rest of the proof will be by transfinite induction on the rank of $X''$,
  which from now on we denote simply as $X$.

  Let $(\tau_1,\tau_2,\ldots)$ and $(h_1,h_2,\ldots)$ be $Y$'s stopping times and bets
  ($\bbbr^{J^*+1}$-valued random vectors).
  First we suppose that $X$ is a simple capital process;
  let its stopping times and bets (this time $\bbbr^{J^*}$-valued random vectors)
  be $(\tau'_1,\tau'_2,\ldots)$ and $(h'_1,h'_2,\ldots)$.
  For each $\omega\in\Omega$ (i.e., each $\omega$ in the sample space for the old market),
  we can rearrange $\tau_1(\omega),\tau_2(\omega),\ldots,\tau'_1(\omega),\tau'_2(\omega),\ldots$
  into an increasing sequence $\tau''_1(\omega)\le\tau''_2(\omega)\le\cdots$
  and define the bet on $S_j$, $j\in\{1,\ldots,J^*\}$, over the interval $(\tau''_n,\tau''_{n+1}]$
  (taken at time $\tau''_n$)
  as $h_Y^{0}h_X^j+h_Y^j$,
  where $h_X^j$ (the $j$th component of one of the $h_i$) is the bet of $X$ on $S_j$ over that interval,
  $h_Y^{0}$ is the bet of $Y$ on $X'$ over that interval,
  and $h_Y^j$ is the bet of $Y$ on $S_j$ over that interval.
  These bets and the stopping times $\tau''_n$ give rise to $Y$
  as simple capital process in the old market;
  its effective domain is the whole of $\Omega$.

  Finally we apply transfinite induction on the rank of $X$.
  Suppose that $X$ is a continuous martingale of rank $\alpha$
  and that $X=\lim_{k\to\infty}X^k$ on $\dom X$ for continuous martingales $X^k$ of ranks less than $\alpha$.
  By the inductive assumption, for each $k$,
  the simple capital process $Y$ (determined by $(\tau_1,\tau_2,\ldots)$ and $(h_1,h_2,\ldots)$)
  applied to the old market extended by adding $X^k$
  gives a nonnegative supermartingale $Y^k_t(\omega):=Y_t(X^k(\omega),\omega)$.
  The nonnegative supermartingales $Y^k$ will converge to $Y'_t(\omega):=Y_t(X(\omega),\omega)$ inside $\dom X$
  (even in the sense of $\lim$, let alone in the sense of $\liminf$).
  Let $X'$ be a nonnegative supermartingale such that $X'_0=1$ and $X'_t(\omega)=\infty$ when $(t,\omega)\notin\dom X$.
  Then the nonnegative supermartingales $Y^k+X'/k$ will converge to $Y'$ everywhere as $k\to\infty$,
  and so $Y'$ is a nonnegative supermartingale as well.
\end{proof}

\begin{theorem}\label{thm:conservative}
  Let $X$ be a continuous martingale.
  Consider the extended market in which $X$ is traded as a new security.
  Any continuous martingale $Y$ in the new market is a continuous martingale in the old market.
\end{theorem}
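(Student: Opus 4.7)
The plan is to mirror the proof of Theorem~\ref{thm:conservative-super}, substituting the $\lim$-closed machinery of the continuous-martingale class for the $\liminf$-closed machinery of nonnegative supermartingales, while tracking effective domains carefully as in the proof of Theorem~\ref{thm:side-conservative}. The argument proceeds by two nested transfinite inductions: first on the rank of $Y$, to reduce to $Y$ being a simple capital process in the new market, and then on the rank of $X$.

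For the outer induction, if $Y$ in the new market is the uniform-on-compacts limit on $\dom Y$ of continuous martingales $Y^k$ of lower rank, the inductive hypothesis makes each $\tilde Y^k_t(\omega):=Y^k_t(X(\omega),\omega)$ a continuous martingale in the old market, and $\tilde Y^k\to Y'$ uniformly on compacts on $\dom Y'\subseteq\dom X$ by restricting the new-market convergence. To witness that $\dom Y'$ holds quasi-always, take a nonnegative supermartingale $Z$ in the new market with $Z_0=1$ and $Z_t(\omega,X')=\infty$ outside $\dom Y$; by Theorem~\ref{thm:conservative-super}, $Z'_t(\omega):=Z_t(X(\omega),\omega)$ is a nonnegative supermartingale in the old market, and a convex combination of $Z'$ with a witness for $\dom X$ supplies the witness for $\dom Y'$.

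For the inner induction, the base case (with $X$ a simple capital process) is covered verbatim by the merging-of-stopping-times-and-bets argument in the proof of Theorem~\ref{thm:conservative-super}: $Y'$ appears as a simple capital process in the old market with effective domain all of $[0,\infty)\times\Omega$, and hence as a continuous martingale of rank~$0$. For $X=\lim_k X^k$ on $\dom X$ with $X^k$ of lower rank, the inductive hypothesis makes each $Y^k_t(\omega):=Y_t(X^k(\omega),\omega)$ a continuous martingale in the old market with $\dom Y^k\supseteq\dom X^k\supseteq\dom X=\dom Y'$. The finite-sum form of the simple capital process $Y$, together with $X^k\to X$ uniformly on compacts on $\dom X$, yields $Y^k\to Y'$ uniformly on compacts on $\dom Y'$, and the $\lim$-closed property then places $Y'$ in the continuous-martingale class of the old market.

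I expect the most delicate step to be this last convergence assertion. The proof of Theorem~\ref{thm:conservative-super} gets away with pointwise $\liminf$ convergence, with remaining non-uniformities absorbed through the $\liminf$-closure and the term $X'/k$, but here I need uniform-on-compacts convergence on the full effective domain $\dom Y'$. Discontinuities of the stopping times of $Y$ as functions of the $X'$-argument (e.g., first-passage times at tangential crossings) can threaten uniform convergence on particular paths; such paths must be absorbed into the $\dom X$ witness, so that convergence is only required on the effective domain $\dom Y'$, which is exactly the hypothesis the $\lim$-closed property demands.
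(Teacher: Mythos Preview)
Your outline tracks the paper's argument, and the outer reduction on the rank of $Y$ is handled correctly (including the appeal to Theorem~\ref{thm:conservative-super} for the q.a.\ witness of $\dom Y'$). The gap is exactly where you sense it: the uniform-on-compacts convergence $Y^k\to Y'$ in the inner inductive step.

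Your proposed resolution---absorbing paths on which the stopping times of $Y$ are discontinuous in the $X'$-argument into the complement of $\dom X$---does not work. The set $\dom X$ records only where $X$ is real-valued; it has nothing to do with whether a first-passage time such as $\tau_1(X',\omega)=\inf\{t:X'_t=1\}$ depends continuously on $X'$ at $X'=X(\omega)$. A path $\omega$ on which $X(\omega)$ touches the level~$1$ tangentially lies in $\dom X=\dom Y'$, yet the approximants $X^k(\omega)$ may miss the level entirely, giving $\tau_1(X^k(\omega),\omega)=\infty$ while $\tau_1(X(\omega),\omega)<\infty$; then $Y^k\not\to Y'$ at~$\omega$, and no q.a.\ witness built from $\dom X$ excludes such~$\omega$.

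The missing idea is the decoupling of the two roles $X$ plays in the simple capital process $Y$: one copy $X'$ for information (entering the stopping times and bets) and a separate copy $X''$ for trading (entering the capital increments), proving the stronger statement for arbitrary continuous martingales $X',X''$. Theorem~\ref{thm:side-conservative} then disposes of $X'$: adding it as side information does not enlarge the class of continuous martingales, so the $\tau_n$ and $h_n$ may be taken to depend on $\omega$ alone. With the stopping times thus fixed independently of $k$, the inner induction on the rank of $X=X''$ becomes straightforward: for $s\le t$ the difference $Y_s(X^k(\omega),\omega)-Y_s(X(\omega),\omega)$ is a finite sum of terms $h_n^0(\omega)\bigl((X^k_{\tau_{n+1}\wedge s}-X^k_{\tau_n\wedge s})-(X_{\tau_{n+1}\wedge s}-X_{\tau_n\wedge s})\bigr)$, bounded by a constant depending only on $(t,\omega)$ times $\sup_{u\le t}|X^k_u-X_u|$. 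This separation is exactly what the paper invokes (``as in the proof of Theorem~\ref{thm:conservative-super}, we will assume that $X$ is only used for trading and not as information path'') before the merging argument you cite; once you insert it, the rest of your proof goes through.
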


The interpretation of the statement is the same as for Theorem~\ref{thm:side-conservative}
but with \eqref{eq:Y-bis} in place of \eqref{eq:Y}.

\begin{proof}[Proof of Theorem~\ref{thm:conservative}]
  Suppose $Y$ is a continuous martingale in the new market $(X',S_1,\ldots,S_{J})$.
  As before, it suffices to consider the case where $Y$ is a simple capital process.

  We are given a simple capital process $Y$ in the new market,
  and we need to check that it is a continuous martingale in the old market.
  As in the proof of Theorem~\ref{thm:conservative-super},
  we will assume that $X$ is only used for trading and not as information path.
  In the same proof we showed that $Y$ is a simple capital process in the old market
  when $X$ is a simple capital process.
  Next we proceed, as usual, by transfinite induction on the rank of $X$.
  Suppose that $X$ is a continuous martingale of rank $\alpha$
  and that $X=\lim_{k\to\infty}X^k$ inside $\dom X$ for continuous martingales $X^k$ of ranks less than $\alpha$.
  By the inductive assumption, for each $k$,
  $Y^k_t(\omega):=Y_t(X^k(\omega),\omega)$ is a continuous martingale.
  These continuous martingales will converge to $Y'_t(\omega):=Y_t(X(\omega),\omega)$
  uniformly on compact time intervals
  inside the event $\cap_{k=1}^{\infty}\dom Y^k\cap\dom X$.
  Since this event holds quasi-always (cf.\ Lemma~\ref{lem:qa}),
  $Y'$ is a continuous martingale.
\end{proof}

\section{It\^o integration}

In this section we combine the definitions and results of \cite{\XLII}
and the earlier paper \cite{Perkowski/Promel:2016-local}
with the conservatism of continuous processes and martingales
(Theorems~\ref{thm:side-conservative-super}--\ref{thm:conservative}).
Our exposition follows, to a large degree, \cite{\XLII}.

Let $H$ be a continuous process and $X$ be a continuous martingale.
In view of Theorems~\ref{thm:side-conservative-super}--\ref{thm:conservative},
we can assume that $H$ is a basic information path and $X$ is a basic price path,
although most of our discussion does not depend on this assumption.
(This remark is also applicable to the following sections.)
To define the It\^o integral of $H$ w.r.\ to $X$,
we need to partition time with sufficient resolution to see arbitrarily small changes in both $H$ and $X$.
A \emph{partition} is any increasing sequence $T$ of stopping times $0=T_0\le T_1\le T_2\le\cdots$
such that $\lim_{k\to\infty}T_k(\omega)>t$ holds quasi-always.
Let us say that a sequence $T^1,T^2,\ldots$ of partitions is \emph{fine} for a continuous process $Y$
if, for all $n,k\in\{1,2,\ldots\}$ and $\omega\in\Omega$,
\begin{equation*}
  \sup_{t\in[T^n_{k-1}(\omega),T^n_k(\omega)]:Y_t(\omega)\in\bbbr}
  Y_t(\omega)
  -
  \inf_{t\in[T^n_{k-1}(\omega),T^n_k(\omega)]:Y_t(\omega)\in\bbbr}
  Y_t(\omega)
  \le
  2^{-n}.
\end{equation*}
One sequence of partitions $T^1,T^2,\ldots$ that is fine for both $H$ and $X$ is
\begin{equation}\label{eq:T}
  T^n_k(\omega)
  :=
  \inf
  \left\{
    t>T^n_{k-1}(\omega)
    \st
    \left|H_t-H_{T^n_{k-1}}\right|
    \vee
    \left|X_t-X_{T^n_{k-1}}\right|
    =
    2^{-n-1}
  \right\}
\end{equation}
for $k=1,2,\ldots$,
where the equality in \eqref{eq:T} is regarded as false when $H_t=\cemetery$ or $X_t=\cemetery$.
(The fact that $T^n_k$ are stopping times follows from Lemma~\ref{lem:Sigma}.)

Given a sequence of partitions $T^n$ that is fine for a continuous process $H$ and a continuous martingale $X$, we set
\begin{equation}\label{eq:integral}
  (H\cdot X)^n_t
  :=
  \sum_{k=1}^{\infty}
  H_{T^n_{k-1}\wedge t}
  \Bigl(
    X_{T^n_{k}\wedge t}
    -
    X_{T^n_{k-1}\wedge t}
  \Bigr),
  \quad
  n=1,2,\ldots\,.
\end{equation}
for all $t\in[0,\infty)$.

\begin{theorem}\label{thm:Ito-integral}
  For any sequence of partitions $T^n$ fine for $H$ and $X$,
  $(H\cdot X)^n$ converge ucqa as $n\to\infty$.
  The limit will stay the same quasi-always if $T^n$ is replaced by another sequence of partitions fine for $H$ and $X$.
\end{theorem}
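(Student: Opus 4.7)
The plan is to proceed in three stages: first reduce to the case where $H$ is a basic information path and $X$ a basic price path, so that each $(H\cdot X)^n$ is itself a simple capital process; second, establish that the sequence $(H\cdot X)^n$ is Cauchy in the ucqa sense; and third, show the ucqa limit is independent of the fine partition sequence.

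First I would apply Theorems~\ref{thm:side-conservative-super}--\ref{thm:conservative} to adjoin $H$ to the information paths and $X$ to the price paths; after this reduction $(H\cdot X)^n$ is realized by the simple trading strategy that holds $H_{T^n_{k-1}}$ units of $X$ over $(T^n_{k-1},T^n_k]$, with a routine localization (stop when $|H|$ exceeds a level $N$, then take $N\to\infty$) to handle the boundedness requirement on the bets. Measurability of the $T^n_k$ comes from Lemma~\ref{lem:Sigma}, and the $\tau_n$-measurability of the bets follows from the Galmarino-style definitions. Next, form a common refinement of $T^n$ and $T^{n+1}$ by merging and sorting (which is again a partition fine for $(H,X)$) and express
\[
  D^n := (H\cdot X)^{n+1} - (H\cdot X)^n
\]
as a single simple capital process on $X$; each of its bets is a difference of values of $H$ across a single $T^n$-cell, hence, by fineness of $T^n$ for $H$, bounded in absolute value by $2^{-n}$.

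The main obstacle is converting this ``small-bet'' property of $D^n$ into ucqa smallness of its paths, which requires controlling a simple capital process on the continuous martingale $X$ by nonnegative supermartingales. The strategy is to use a game-theoretic quadratic-variation estimate for $X$, in the spirit of \cite{\XLII} and \cite{Perkowski/Promel:2016-local}: for each $T>0$ and $\epsilon>0$, construct a nonnegative supermartingale of initial value $O(2^{-2n}/\epsilon^2)$ that reaches $1$ on the event $\{\sup_{s\leq T}|D^n_s|\geq\epsilon\}$, obtained from a Doob-type comparison of $(D^n)^2$ with a quadratic-variation-like process built from $X$ along a sufficiently fine partition. Choosing $\epsilon = 2^{-n/3}$ and summing these supermartingales over $n$ with a convergent geometric weighting produces a single nonnegative supermartingale witnessing that $\sup_{s\leq T}|D^n_s|\to 0$ quasi-always; enlarging $T$ over positive integers via Lemma~\ref{lem:qa} then yields the ucqa convergence. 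For the independence claim, given two fine sequences $T^n$ and $\tilde T^n$, the interleaved sequence $T^1,\tilde T^1,T^2,\tilde T^2,\ldots$ is still fine for $(H,X)$; applying the convergence already established to it gives a single ucqa limit, which must coincide quasi-always with both subsequence limits along $T^n$ and $\tilde T^n$.
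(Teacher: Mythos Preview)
Your plan is essentially the paper's: reduce via Theorems~\ref{thm:side-conservative-super}--\ref{thm:conservative} to the case where $H$ and $X$ are basic paths, then run (in your case, sketch) the Cauchy-type estimate of \cite{\XLII} to get ucqa convergence; the paper simply cites that reference rather than reproducing the supermartingale construction you outline.

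There is one slip in your uniqueness argument. The interleaved sequence $T^1,\tilde T^1,T^2,\tilde T^2,\ldots$ is \emph{not} fine for $(H,X)$ in the sense defined here: at level $2n$ the partition is $\tilde T^n$, whose cells have oscillation bounded only by $2^{-n}$, not the required $2^{-2n}$. So you cannot simply invoke the convergence result already proved. The paper's fix is to apply the same Cauchy estimate you used for $D^n=(H\cdot X)^{n+1}-(H\cdot X)^n$ directly to the difference of the two integrals at level $n$, i.e., to $T^n$ versus $\tilde T^n$ in place of $T^{n+1}$ versus $T^n$; the bets in that difference are again bounded by $2^{-n}$ (fineness of $T^n$ for $H$), and your supermartingale construction goes through unchanged. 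Alternatively, your interleaving idea can be salvaged by observing that your estimate only needs the oscillation bounds to decay geometrically, not at the exact rate $2^{-n}$, but as written the step fails.
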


The limit whose existence is asserted in Theorem~\ref{thm:Ito-integral} will be denoted $(H\cdot X)_s$
or $\int_0^s H\dd X$ and called the \emph{It\^o integral} of $H$ w.r.\ to $X$.
Since the convergence is uniform on compact time intervals,
$(H\cdot X)_s$ is a continuous function of $s\in[0,t]$ quasi always
(which implies that $(H\cdot X)_s$ is a continuous function of $s\in[0,\infty)$
almost surely, as defined in Section~\ref{sec:Dubins-Schwarz}).

\begin{proof}[Proof of Theorem~\ref{thm:Ito-integral}]
  The first part of the theorem follows from Theorem~1 in \cite{\XLII} (proved in Section~4 of that paper)
  in combination with Theorems~\ref{thm:side-conservative-super}--\ref{thm:conservative}.
  (Theorem~1 in \cite{\XLII} is a statement about a specific sequence of partitions similar to \eqref{eq:T},
  but the argument is applicable to any fine sequence.)

  To show that the limits coincide quasi-always for two fine, for $H$ and $X$, sequences $T_1$ and $T_2$ of partitions,
  the argument in the proof of Theorem~1 in \cite{\XLII}
  should be applied to $T_1^n$ and $T_2^n$ instead of $T^n$ and $T^{n-1}$.
\end{proof}

\begin{lemma}\label{lem:stochastic-integral}
  The stochastic integral w.r.\ to a continuous martingale is a continuous martingale.
\end{lemma}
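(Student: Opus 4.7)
As remarked at the opening of this section, Theorems~\ref{thm:side-conservative-super}--\ref{thm:conservative} let me assume that $H$ is a basic information path and $X$ a basic price path. The class of continuous martingales is $\lim$-closed by definition, and Theorem~\ref{thm:Ito-integral} gives $(H\cdot X)^n\to(H\cdot X)$ ucqa along any fine sequence of partitions, so it suffices to show that each approximating Riemann sum $(H\cdot X)^n$ from \eqref{eq:integral} is itself a continuous martingale, and then to observe that choosing the effective domain to be $\dom(H\cdot X):=\{(t,\omega):Z_t(\omega)<\infty\}\cap\bigcap_n\dom(H\cdot X)^n$, where $Z$ is a nonnegative supermartingale witnessing the ucqa convergence, makes the $\lim$-closure hypotheses hold (the intersection holds quasi-always by Lemma~\ref{lem:qa}).

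The substantive step is showing that $(H\cdot X)^n$ is a continuous martingale. Expression \eqref{eq:integral} already presents it as a simple capital process with stopping times $\tau_k:=T^n_{k-1}$ and bets $h_k:=H_{T^n_{k-1}}$ on $X$, but this is not literally a simple trading strategy in the sense of Section~\ref{sec:martingales}: the bets $H_{T^n_{k-1}}$ need not be bounded, and the partition is only required to tend to $\infty$ quasi-always, not for every $\omega$. I plan to repair both defects by truncation. Set $\sigma_m:=\inf\{t:\lvert H_t\rvert\ge m\}\wedge m$, which one checks is a stopping time using Lemma~\ref{lem:Sigma} together with adaptedness of $H$, and let $Y^{n,m}$ be the simple capital process whose stopping times are $T^n_{k-1}\wedge\sigma_m$ for $k=1,\dots,m+1$ followed by $\sigma_m+j$ for $j=1,2,\ldots$, and whose bet on $X$ over the $k$-th interval is $H_{T^n_{k-1}\wedge\sigma_m}$ for $k\le m$ and $0$ thereafter. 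The bets are bounded by $m$ and $\tau_k$-measurable, and the augmented partition tends to $\infty$ for every $\omega$, so $Y^{n,m}$ is a genuine simple capital process, hence a continuous martingale.

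It remains to check that $Y^{n,m}$ approximates $(H\cdot X)^n$ in the $\lim$-closure sense as $m\to\infty$. For each $(t,\omega)\in\dom(H\cdot X)^n\subseteq\dom H\cap\dom X$, the path $H(\omega)$ is continuous and finite on $[0,t]$ and only finitely many $T^n_k(\omega)$ are less than $t$, so for all sufficiently large $m$ one has both $\sigma_m(\omega)>t$ and $T^n_m(\omega)>t$; once these hold, $Y^{n,m}_s(\omega)=(H\cdot X)^n_s(\omega)$ identically for $s\in[0,t]$. Since $\dom Y^{n,m}=[0,\infty)\times\Omega\supseteq\dom(H\cdot X)^n$, this suffices for $\lim$-closure and shows $(H\cdot X)^n$ is a continuous martingale. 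The main obstacle is the truncation construction in the second paragraph: one must simultaneously arrange boundedness of the bets, their measurability, the pointwise exhaustion of $[0,\infty)$ by the partition, and the eventual equality $Y^{n,m}=(H\cdot X)^n$ on the relevant compact sets, all within the rigid framework of simple trading strategies.
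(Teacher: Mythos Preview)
Your proposal is correct and follows the same route as the paper: invoke Theorems~\ref{thm:side-conservative} and~\ref{thm:conservative} to reduce to $H$ and $X$ basic, argue that each $(H\cdot X)^n$ is a continuous martingale, and then pass to the ucqa limit via Theorem~\ref{thm:Ito-integral} and $\lim$-closure. The paper's proof compresses all of this into two lines and treats $(H\cdot X)^n$ as a continuous martingale without comment; your truncation $Y^{n,m}$ supplies the detail the paper suppresses, namely that the bets $H_{T^n_{k-1}}$ are not literally bounded and the partition need not exhaust $[0,\infty)$ for every $\omega$, so $(H\cdot X)^n$ must be realized as a $\lim$ of genuine simple capital processes rather than as one directly.
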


\begin{proof}
  By Theorems~\ref{thm:side-conservative} and~\ref{thm:conservative},
  \eqref{eq:integral} are continuous martingales,
  and therefore, in view of Theorem~\ref{thm:Ito-integral},
  their ucqa limit is a continuous martingale as well.
\end{proof}

Let us now check that the definition of the It\^o integral does not depend (quasi-always) on the sequence of partitions $T^n$,
provided it is interesting, in some sense, and fine enough.
First we give a very simple formal statement,
and then discuss the intuition behind it.

\begin{corollary}\label{cor:invariance}
  Let $\TTT$ be a countable set of sequences of partitions.
  There exists a continuous process $H\cdot X$ such that for any element of $\TTT$ that is fine for $H$ and $X$,
  $(H\cdot X)^n$ (defined in terms of that element) converges ucqa to that process $H\cdot X$ as $n\to\infty$.
\end{corollary}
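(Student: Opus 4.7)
The plan is to fix $H\cdot X$ to be the ucqa-limit of the approximations \eqref{eq:integral} built from the canonical partition sequence \eqref{eq:T}; by Theorem~\ref{thm:Ito-integral} this limit exists and defines a continuous process, so $H\cdot X$ is well-defined. The task then reduces to showing that for any \emph{other} fine element of $\TTT$ the corresponding approximations still converge ucqa to this same process.

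Enumerate the elements of $\TTT$ as $T^{(1)},T^{(2)},\ldots$, and for each $i$ such that $T^{(i)}$ is fine for $H$ and $X$ write $I^{(i),n}$ for the sum \eqref{eq:integral} formed from the $n$th partition in $T^{(i)}$. The first assertion of Theorem~\ref{thm:Ito-integral} furnishes, for each such $i$, a continuous process $J^{(i)}$ that is the ucqa-limit of $I^{(i),n}$ as $n\to\infty$; its second (uniqueness) assertion, applied to the two fine sequences \eqref{eq:T} and $T^{(i)}$, then gives $J^{(i)}=H\cdot X$ quasi-always.

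To finish, I would combine these two statements into the desired ucqa convergence $I^{(i),n}\to H\cdot X$. Concretely, let $Y^{(i)}$ be a nonnegative supermartingale with $Y^{(i)}_0=1$ witnessing the ucqa convergence $I^{(i),n}\to J^{(i)}$, and let $Z^{(i)}$ be a nonnegative supermartingale with $Z^{(i)}_0=1$ witnessing the quasi-always equality $J^{(i)}=H\cdot X$. The convex combination $\tfrac12(Y^{(i)}+Z^{(i)})$ is again a nonnegative supermartingale starting at $1$, and it is infinite at every $(t,\omega)$ where at least one of the two underlying properties fails on $[0,t]$; at points where both hold, the triangle inequality yields $\sup_{s\in[0,t]}|I^{(i),n}_s-(H\cdot X)_s|\to 0$. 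Hence $\tfrac12(Y^{(i)}+Z^{(i)})$ witnesses $I^{(i),n}\to H\cdot X$ ucqa, as required. The countability of $\TTT$ together with Lemma~\ref{lem:qa} would also let us merge the witnesses across $i$ into a single supermartingale good for all fine $T^{(i)}\in\TTT$ simultaneously, although the statement as given only asks for separate ucqa convergence for each $i$.

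There is no hard obstacle here; the only care needed is in the bookkeeping with the witnessing nonnegative supermartingales, so that "ucqa convergence to $J^{(i)}$" composed with "quasi-always equality $J^{(i)}=H\cdot X$" genuinely yields "ucqa convergence to $H\cdot X$". Since a convex combination of nonnegative supermartingales is itself a nonnegative supermartingale (as already used in the proof of Lemma~\ref{lem:qa}), this composition is immediate and the corollary amounts to a direct repackaging of Theorem~\ref{thm:Ito-integral}.
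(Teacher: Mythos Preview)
Your proposal is correct and is essentially the paper's own argument written out in full: the paper's proof is the one-liner ``This follows immediately from Theorem~\ref{thm:Ito-integral} and Lemma~\ref{lem:qa},'' and what you have done is unpack exactly that, fixing $H\cdot X$ via the canonical sequence \eqref{eq:T}, invoking both parts of Theorem~\ref{thm:Ito-integral} for each fine $T^{(i)}$, and using the convex-combination trick behind Lemma~\ref{lem:qa} to merge the witnessing supermartingales. Your final remark about merging across all $i$ via countability is precisely why the paper cites Lemma~\ref{lem:qa}.
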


\begin{proof}
  This follows immediately from Theorem~\ref{thm:Ito-integral} and Lemma~\ref{lem:qa}.
\end{proof}

The application of Corollary~\ref{cor:invariance} that we have in mind
is that we fix a relatively formal language
for talking about stochastic processes (such as the language of Revuz and Yor \cite{Revuz/Yor:1999}).
The language allows us to define various sequences of partitions
(perhaps referring to $H$, $X$, and the basic paths $(S_1,\ldots,S_J)$),
such as \eqref{eq:T}.
There are countably many sentences in the language,
and those of them describing sequences of partitions form a countable set
which we denote $\TTT$.
Corollary~\ref{cor:invariance} then gives us an invariant definition of It\^o integral.

\begin{remark}
  The language that we allow when defining $\TTT$
  cannot be English or the language of logic textbooks such as \cite{Mendelson:1997}:
  e.g., \cite{Mendelson:1997} contains a phrase,
  ``the least positive integer that is not denoted by an English expression
  containing fewer than 200 occurrences of symbols'' (\cite{Mendelson:1997}, p.~3, Berry's paradox),
  showing that the notion of definability can be murky
  if the language is too rich.
\end{remark}

\begin{remark}
  The use of formal languages in the foundations of probability goes back
  to at least Wald's \cite{Wald:1937-local} work on von Mises's collectives.
\end{remark}

It is easy to check that our definition of the It\^o integral $H\cdot X$ carries over verbatim
to the case where $X$ is a continuous semimartingale
and the sequence of partitions is assumed to be fine for $H$
and both components of the standard decomposition of $X$
(see Corollary~\ref{cor:decomposition}).
Alternatively, we obtain the same result (quasi-always) by setting $H\cdot X$
to $H\cdot Y + H\cdot A$, where $Y+A$ is the standard decomposition of $X$
and $H\cdot A$ is the Lebesgue--Stiltjes integral.

\section{Covariation and quadratic variation}

We start from establishing the existence of the covariation between two continuous martingales, $X$ and $Y$.
The covariation of $X$ and $Y$ can be approximated by
\begin{equation}\label{eq:covariation}
  [X,Y]^n_t
  :=
  \sum_{k=1}^{\infty}
  \left(
    X_{T^n_k\wedge t}
    -
    X_{T^n_{k-1}\wedge t}
  \right)
  \left(
    Y_{T^n_k\wedge t}
    -
    Y_{T^n_{k-1}\wedge t}
  \right),
  \quad
  n=1,2,\ldots\,.
\end{equation}
We show that the ucqa limit of $[X,Y]^n$ as $n\to\infty$ exists for fine sequences of partitions,
denote it $[X,Y]$ (or $[X,Y]_t(\omega)$ if we need to mention the arguments),
and call it the \emph{covariation} between $X$ and $Y$.

\begin{lemma}\label{lem:covariation}
  The ucqa limit of~\eqref{eq:covariation} exists for sequences of partitions
  that are fine for $X$ and $Y$.
  Moreover, it satisfies the \emph{integration by parts formula}\Extra{\ (\cite{Protter:2005-local}, Corollary~2 on p.~68)}
  \begin{equation}\label{eq:by-parts}
    X_t Y_t
    =
    (X\cdot Y)_t
    +
    (Y\cdot X)_t
    +
    [X,Y]_t
    \quad
    \text{q.a.}
  \end{equation}
\end{lemma}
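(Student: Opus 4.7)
The plan is to derive convergence of the covariation sums from the elementary polynomial identity
\begin{equation*}
  ab - a'b' \;=\; a'(b-b') + b'(a-a') + (a-a')(b-b'),
\end{equation*}
applied with $a=X_{T^n_k\wedge t}$, $a'=X_{T^n_{k-1}\wedge t}$, $b=Y_{T^n_k\wedge t}$, $b'=Y_{T^n_{k-1}\wedge t}$. Summing over $k\ge 1$ and telescoping (legitimate because, for any fixed $(t,\omega)$, the property $\lim_k T^n_k(\omega)=\infty$ holding quasi-always makes only finitely many increments nontrivial) yields the pointwise identity
\begin{equation*}
  X_t Y_t - X_0 Y_0 \;=\; (X\cdot Y)^n_t + (Y\cdot X)^n_t + [X,Y]^n_t
\end{equation*}
valid for every $n$ and for every $(t,\omega)$ at which both $X_t$ and $Y_t$ are real.

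Next, since the sequence $T^n$ is assumed fine for both $X$ and $Y$, Theorem~\ref{thm:Ito-integral} delivers ucqa convergence $(X\cdot Y)^n\to(X\cdot Y)$ and $(Y\cdot X)^n\to(Y\cdot X)$. Solving the telescoped identity for $[X,Y]^n_t$ expresses it as the sum of a quantity independent of $n$ and two ucqa convergent sequences; hence $[X,Y]^n$ itself converges ucqa, and I would simply \emph{define}
\begin{equation*}
  [X,Y]_t \;:=\; X_t Y_t - X_0 Y_0 - (X\cdot Y)_t - (Y\cdot X)_t.
\end{equation*}
Rearranging is the integration by parts formula~\eqref{eq:by-parts}, up to the initial value $X_0 Y_0$: the identity I obtain is $X_tY_t = X_0Y_0 + (X\cdot Y)_t+(Y\cdot X)_t+[X,Y]_t$ quasi-always, which matches \eqref{eq:by-parts} under the convention that the constant is absorbed into one of the right-hand terms (otherwise \eqref{eq:by-parts} should carry an explicit $X_0 Y_0$).

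Independence of the resulting $[X,Y]$ from the particular fine sequence is automatic from the displayed formula, because each Itô integral on the right-hand side is invariant (up to quasi-always equality) under such a change by Theorem~\ref{thm:Ito-integral}; Lemma~\ref{lem:qa} lets me combine the two exceptional sets into one. I expect the main obstacle to be purely bookkeeping rather than analytic: verifying that the telescoping is valid at the stopping-time grid, and handling the cemetery convention outside $\dom X\cap\dom Y$. Both are managed by the observation that $\dom X\cap\dom Y\cap\dom(X\cdot Y)\cap\dom(Y\cdot X)$ holds quasi-always (continuous martingales take real values quasi-always, combined with Lemma~\ref{lem:qa}), so the algebraic identity, originally valid pointwise on that intersection, extends to a quasi-always statement. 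No new convergence estimate is needed beyond what Theorem~\ref{thm:Ito-integral} already supplies.
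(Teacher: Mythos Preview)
Your proposal is correct and follows essentially the same route as the paper: both use the telescoping identity $X_tY_t-X_0Y_0=(X\cdot Y)^n_t+(Y\cdot X)^n_t+[X,Y]^n_t$ and then invoke Theorem~\ref{thm:Ito-integral} to pass to the ucqa limit. Your observation about the missing $X_0Y_0$ in \eqref{eq:by-parts} is a genuine catch---the paper's displayed identity $X_tY_t=(X\cdot Y)^n_t+(Y\cdot X)^n_t+[X,Y]^n_t$ omits that constant, and \eqref{eq:by-parts} should indeed carry an explicit $X_0Y_0$ on the right.
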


\begin{proof}
  The stochastic integral $(X\cdot Y)_t=\int_0^t X_s\dd Y_s$ was defined
  in the previous section as the ucqa limit as $n\to\infty$ of
  \begin{equation*}
    (X\cdot Y)^n_t(\omega)
    :=
    \sum_{k=1}^{\infty}
    X_{T^n_{k-1}\wedge t}
    \left(
      Y_{T^n_k\wedge t}
      -
      Y_{T^n_{k-1}\wedge t}
    \right),
    \quad
    n=1,2,\ldots\,.
  \end{equation*}
  Swapping $X$ and $Y$ we obtain the analogous expression for $(Y\cdot X)_t=\int_0^t Y_s\dd X_s$.
  It is easy to check that
  \begin{equation*}
    X_t Y_t
    =
    (X\cdot Y)^n_t
    +
    (Y\cdot X)^n_t
    +
    [X,Y]^n_t.
  \end{equation*}
  Passing to the ucqa limit as $n\to\infty$ we obtain the existence of $[X,Y]$
  and the integration by parts formula~\eqref{eq:by-parts}.
\end{proof}

It is clear from \eqref{eq:by-parts} that $[X,Y]$ is a continuous process.
Moreover, the next lemma will show that it is a finite variation continuous process.

Setting $Y:=X$ leads to the definition of the \emph{quadratic variation} $[X,X]$,
which we will sometimes abbreviate to $[X]$.
It is clear from the definition \eqref{eq:covariation}
that $[X]$ is an increasing and, therefore, finite variation continuous process.
The following lemma shows that $[X,Y]$ is a finite variation continuous process
for any continuous martingales $X$ and $Y$
(and, as Lemma~\ref{lem:invariant-covariation} below will show,
even for any continuous semimartingales $X$ and $Y$).

\begin{lemma}\label{lem:finite-variation}
  For any continuous martingales $X$ and $Y$,
  \begin{equation*}
    [X,Y]
    =
    \frac12
    \left(
      [X+Y] - [X] - [Y]
    \right)
    \text{\quad q.a.},
  \end{equation*}
  and $[X,Y]$ is a finite variation continuous process.
\end{lemma}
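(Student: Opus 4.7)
The plan is to deduce both parts of the lemma from a pointwise algebraic identity at the level of the approximating sums \eqref{eq:covariation} and then pass to the ucqa limit. The central algebraic input is the termwise expansion $(a+b)^2 = a^2 + 2ab + b^2$, which yields the pointwise identity
\begin{equation*}
  [X+Y]^n_t = [X]^n_t + 2\,[X,Y]^n_t + [Y]^n_t,
\end{equation*}
valid for every $n$, $t$, and $\omega$.

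To exploit this I first need $X+Y$ to be a continuous martingale (so that Lemma~\ref{lem:covariation} applies to the pair $(X+Y,X+Y)$). Since $X$ and $Y$ are continuous martingales their initial values $X_0$ and $Y_0$ are deterministic; iterating Theorem~\ref{thm:conservative} to add both $X$ and $Y$ to the market as traded securities, a simple capital process in the extended market with initial capital $X_0+Y_0$ and constant bet $1$ on each of $X$ and $Y$ equals $X+Y$, and two applications of Theorem~\ref{thm:conservative} return it to the original market as a continuous martingale.

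Next I would choose a single sequence of partitions $T^n$ simultaneously fine for $X$, $Y$, and $X+Y$, for instance the analogue of~\eqref{eq:T} given by
\begin{equation*}
  T^n_k(\omega)
  :=
  \inf
  \bigl\{
    t > T^n_{k-1}(\omega)
    \st
    |X_t - X_{T^n_{k-1}}|
    \vee
    |Y_t - Y_{T^n_{k-1}}|
    = 2^{-n-2}
  \bigr\};
\end{equation*}
the oscillations of $X$ and $Y$ over each subinterval are then at most $2^{-n-1}$, and by the triangle inequality the oscillation of $X+Y$ is at most $2^{-n}$, so the sequence is indeed fine for all three processes. Applying Lemma~\ref{lem:covariation} along $T^n$ to each of the pairs $(X,X)$, $(Y,Y)$, $(X+Y,X+Y)$, and $(X,Y)$ gives ucqa convergence to the respective covariations; combining these four quasi-always statements via Lemma~\ref{lem:qa} and passing to the ucqa limit in the pointwise identity yields $[X+Y] = [X] + 2[X,Y] + [Y]$ q.a., which is the polarization identity after rearrangement.

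For the finite variation claim, note that each of $[X]^n_t$, $[Y]^n_t$, $[X+Y]^n_t$ is manifestly nondecreasing in $t$ as a sum of squares; their ucqa limits $[X]$, $[Y]$, $[X+Y]$ therefore inherit monotonicity q.a.\ and are finite variation continuous processes, so by the polarization identity the linear combination $[X,Y]$ is also a finite variation continuous process. The only non-routine step is the bookkeeping in realizing $X+Y$ as a continuous martingale and producing one fine partition sequence for all three processes at once; the remainder of the argument is elementary.
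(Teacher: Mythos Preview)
Your proposal is correct and follows essentially the same approach as the paper: apply the polarization identity $ab=\frac12((a+b)^2-a^2-b^2)$ termwise to the approximating sums \eqref{eq:covariation} and pass to the ucqa limit. You are simply more explicit than the paper about two points it leaves implicit---that $X+Y$ is a continuous martingale (so that $[X+Y]$ is defined via Lemma~\ref{lem:covariation}) and that a single partition sequence fine for all three processes can be chosen---but the underlying argument is the same.
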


\begin{proof}
  The identity
  $ab=\frac12((a+b)^2-a^2-b^2)$ implies
  \[
    [X,Y]^n
    =
    \frac12
    \left(
      [X+Y]^n - [X]^n - [Y]^n
    \right)
  \]
  for each $n=1,2,\ldots$,
  and it remains to pass to a ucqa limit as $n\to\infty$.
\end{proof}

Let us now extend the notions of covariation and quadratic variation to continuous semimartingales.
Again our previous definition of $[X,Y]$ for continuous martingales carries over
to the case of continuous semimartingales $X$ and $Y$ verbatim
(using a sequence of partitions that is fine for all components of the standard decompositions of $X$ and $Y$),
and it is clear that Lemma~\ref{lem:covariation} holds for any continuous semimartingales.
Quadratic variation can still be defined as $[X]:=[X,X]$.

As in measure-theoretic probability, the covariation between two continuous semimartingales
only depends on their martingale parts.

\begin{lemma}\label{lem:invariant-covariation}
  If $X$ and $X'$ are two continuous semimartingales
  with standard decompositions $X=Y+A$ and $X'=Y'+A'$,
  then $[X,X']=[Y,Y']$ q.a.
\end{lemma}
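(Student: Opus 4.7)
The plan is to use bilinearity of the approximating sums in \eqref{eq:covariation} and then to show that the cross terms involving the finite variation components vanish ucqa.

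First I would fix a single sequence $T^n$ of partitions that is simultaneously fine for all four processes $Y,A,Y',A'$; one such sequence is obtained by modifying \eqref{eq:T} to stop whenever any of the four processes has moved by $2^{-n-1}$ since the previous stopping time. Such a sequence is automatically fine for $X=Y+A$ and $X'=Y'+A'$ as well. The algebraic identity $(a+a')(b+b')=ab+ab'+a'b+a'b'$ applied term-by-term in \eqref{eq:covariation} then gives, for every $n$, $t$, and $\omega$,
\begin{equation*}
[X,X']^n_t = [Y,Y']^n_t + [Y,A']^n_t + [A,Y']^n_t + [A,A']^n_t.
\end{equation*}
By Lemma~\ref{lem:covariation} and the extension of covariation to continuous semimartingales discussed right after it, $[X,X']^n\to[X,X']$ and $[Y,Y']^n\to[Y,Y']$ ucqa along this sequence. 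Hence it suffices to show that each of the three mixed terms $[Y,A']^n$, $[A,Y']^n$, $[A,A']^n$ tends to zero ucqa.

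I would argue this for $[Y,A']^n$; the other two cases are analogous (exploiting the finite variation of $A$, respectively of both $A$ and $A'$). For $(t,\omega)\in\dom Y\cap\dom A'$, fineness of $T^n$ for $Y$ gives $|Y_{T^n_k\wedge s}-Y_{T^n_{k-1}\wedge s}|\le 2^{-n}$ uniformly in $k$ and in $s\in[0,t]$ (the boundary case $T^n_{k-1}<s<T^n_k$ is handled because $[T^n_{k-1},s]\subseteq[T^n_{k-1},T^n_k]$), while the finite variation property of $A'$ gives $\sum_k|A'_{T^n_k\wedge s}-A'_{T^n_{k-1}\wedge s}|\le V_t$, the total variation of $A'$ on $[0,t]$. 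Therefore
\begin{equation*}
\sup_{s\in[0,t]}\bigl|[Y,A']^n_s\bigr|\le 2^{-n}V_t\xrightarrow[n\to\infty]{}0.
\end{equation*}
Since $\dom Y\cap\dom A'\supseteq\dom X\cap\dom X'$ and each of $\dom X$, $\dom X'$ holds quasi-always (as $X$ and $X'$ are continuous processes), Lemma~\ref{lem:qa} implies that this pointwise convergence holds for q.a.\ $(t,\omega)$, which is exactly ucqa convergence to zero.

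The main obstacle is essentially bookkeeping: selecting one partition sequence that is fine for four processes simultaneously, and tracking effective domains carefully so that the ucqa statements for the various pieces combine via Lemma~\ref{lem:qa}. The analytic content—uniform shrinkage of $Y$-increments dominating the bounded total variation of $A'$—is entirely standard.
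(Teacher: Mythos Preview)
Your proposal is correct and follows essentially the same route as the paper: expand $[X,X']^n$ bilinearly into $[Y,Y']^n$ plus three cross terms, then kill each cross term by pairing the $2^{-n}$ bound from fineness with the finite total variation of $A$ or $A'$. The paper's proof is terser on the ucqa bookkeeping and effective domains, but the analytic content is identical; your one throwaway remark that the sequence is ``automatically fine for $X$ and $X'$'' is off by a factor of~$2$ in the oscillation bound, but you never actually use it (you correctly invoke the semimartingale extension, which only requires fineness for the components).
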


\begin{proof}
  By the definition \eqref{eq:covariation} of covariation,
  \begin{align}
    [X,X']^n_t(\omega)
    &=
    \sum_{k=1}^{\infty}
    \left(
      X_{T^n_k\wedge t}
      -
      X_{T^n_{k-1}\wedge t}
    \right)
    \left(
      X'_{T^n_k\wedge t}
      -
      X'_{T^n_{k-1}\wedge t}
    \right)\notag\\
    &=
    \sum_{k=1}^{\infty}
    \left(
      Y_{T^n_k\wedge t}
      -
      Y_{T^n_{k-1}\wedge t}
    \right)
    \left(
      Y'_{T^n_k\wedge t}
      -
      Y'_{T^n_{k-1}\wedge t}
    \right)\label{eq:first}\\
    &\quad{}+
    \sum_{k=1}^{\infty}
    \left(
      Y_{T^n_k\wedge t}
      -
      Y_{T^n_{k-1}\wedge t}
    \right)
    \left(
      A'_{T^n_k\wedge t}
      -
      A'_{T^n_{k-1}\wedge t}
    \right)\label{eq:second}\\
    &\quad{}+
    \sum_{k=1}^{\infty}
    \left(
      A_{T^n_k\wedge t}
      -
      A_{T^n_{k-1}\wedge t}
    \right)
    \left(
      Y'_{T^n_k\wedge t}
      -
      Y'_{T^n_{k-1}\wedge t}
    \right)\label{eq:third}\\
    &\quad{}+
    \sum_{k=1}^{\infty}
    \left(
      A_{T^n_k\wedge t}
      -
      A_{T^n_{k-1}\wedge t}
    \right)
    \left(
      A'_{T^n_k\wedge t}
      -
      A'_{T^n_{k-1}\wedge t}
    \right).\label{eq:fourth}
  \end{align}
  Since the first addend \eqref{eq:first} in the last sum is $[Y,Y']^n_t$,
  we are required to show that the other three addends, \eqref{eq:second}--\eqref{eq:fourth}, converge to zero as $n\to\infty$.
  The same argument works for all three addends;
  e.g., \eqref{eq:second} tends to zero because
  \[
    \left|
      \sum_{k=1}^{\infty}
      \left(
        Y_{T^n_k\wedge t}
        -
        Y_{T^n_{k-1}\wedge t}
      \right)
      \left(
        A'_{T^n_k\wedge t}
        -
        A'_{T^n_{k-1}\wedge t}
      \right)
    \right|
    \le
    2^{-n}
    O(1)
    \to
    0
    \quad
    (n\to\infty),
  \]
  where we have used the fineness of the sequence of partitions and the finite variation of $A'$.
\end{proof}

\section{It\^o formula}

We start from stating the It\^o formula for continuous semimartingales.

\begin{theorem}\label{thm:Ito}
  Let $F:\bbbr\to\bbbr$ be a function of class $C^2$
  and $X$ be a continuous semimartingale.
  Then
  \begin{equation*}
    F(X_t)
    =
    F(X_0)
    +
    \int_0^t
    F'(X_s)
    \dd X_s
    +
    \frac12
    \int_0^t
    F''(X_s)
    \dd[X]_s
    \text{\quad q.a.}
  \end{equation*}
\end{theorem}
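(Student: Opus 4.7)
The plan is a pathwise Föllmer-type argument. Fix $t > 0$ and work with a fixed $\omega$ in the quasi-always set where all relevant convergences occur. Write $X = Y + A$ for the standard decomposition, and pick a sequence of partitions $T^n$ fine simultaneously for $Y$ and $A$ (hence for $X$). Since $X$ is continuous on $[0,t]$, its range $K$ is a compact subset of $\bbbr$, so $F'$ and $F''$ are bounded and uniformly continuous on $K$ with a common modulus $\eta$ satisfying $\eta(r) \to 0$ as $r \to 0^+$. Telescoping $F(X_t)-F(X_0)$ over the partition and applying Taylor's formula with Lagrange remainder gives
\[
F(X_t) - F(X_0)
= \sum_k F'(X_{T^n_{k-1}\wedge t}) \Delta_k X
+ \tfrac12 \sum_k F''(X_{T^n_{k-1}\wedge t}) (\Delta_k X)^2
+ R^n,
\]
where $\Delta_k X := X_{T^n_k \wedge t} - X_{T^n_{k-1} \wedge t}$ and $|R^n| \le \tfrac12 \eta(2^{-n}) [X]^n_t$ by fineness.

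The first-order sum coincides with $(F'(X) \cdot X)^n_t$ and hence converges ucqa to $\int_0^t F'(X_s)\,dX_s$ by Theorem~\ref{thm:Ito-integral}. The remainder $R^n$ vanishes because $[X]^n_t \to [X]_t < \infty$ while $\eta(2^{-n}) \to 0$. It remains to show that the second-order sum converges ucqa to $\tfrac12 \int_0^t F''(X_s)\,d[X]_s$, understood as a pathwise Lebesgue--Stieltjes integral (well defined since $[X]$ is a finite-variation continuous process, in fact increasing). More generally, the key sublemma I would aim for is that, for any continuous $\phi$ on $K$,
\[
\sum_k \phi(X_{T^n_{k-1}\wedge t}) (\Delta_k X)^2
\xrightarrow[n\to\infty]{\text{ucqa}}
\int_0^t \phi(X_s) \, d[X]_s.
\]
I would split the difference as $\Sigma_1^n + \Sigma_2^n$, where $\Sigma_2^n := \sum_k \phi(X_{T^n_{k-1}}) \Delta_k [X] - \int_0^t \phi(X_s)\,d[X]_s$ is the usual Riemann--Stieltjes error, bounded by $\eta(2^{-n})[X]_t \to 0$ since $\phi \circ X$ oscillates by at most $\eta(2^{-n})$ on each $[T^n_{k-1}, T^n_k]$; and $\Sigma_1^n := \sum_k \phi(X_{T^n_{k-1}}) (\Delta_k X^2 - \Delta_k [X])$ measures the defect between the discrete quadratic variation $[X]^n$ and its ucqa limit $[X]$.

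The main obstacle is the control of $\Sigma_1^n$. The partial sums of $(\Delta_k X^2 - \Delta_k [X])$ equal $[X]^n - [X]$ evaluated at partition times, and tend to zero uniformly by Lemma~\ref{lem:covariation}; to convert this via Abel summation into a bound on $\Sigma_1^n$ one needs the total variation $\sum_k |\phi(X_{T^n_k}) - \phi(X_{T^n_{k-1}})|$ to be controlled uniformly in $n$, which is not automatic because the number of partition points in $[0,t]$ can grow with $n$. A cleaner route around this difficulty is to establish Itô's formula first for polynomials by induction on the degree, using the integration-by-parts formula~\eqref{eq:by-parts} together with the identity $[X, H\cdot X] = \int_0^\cdot H\,d[X]$ (which follows by passing partition sums for $[H\cdot X, X]$ to the ucqa limit), and then to extend to arbitrary $F \in C^2$ by uniformly approximating $F$, $F'$, $F''$ on the compact $K$ by a polynomial and its derivatives.
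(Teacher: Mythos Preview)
Your approach---Taylor expansion along a fine partition, then pass to the limit---is exactly the paper's, only the paper is far terser: it writes the second-order term with the mean-value form $\tfrac12 F''(\xi_k)(\Delta_k X)^2$, sums, and simply says ``it remains \ldots\ to pass to the limit,'' without a word about why the second-order sum converges to $\int_0^t F''(X_s)\,d[X]_s$. So you have correctly isolated the only nontrivial point in the argument.

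Your polynomial detour, however, is circular as stated. Deriving $[X,H\cdot X]=\int_0^{\cdot} H\,d[X]$ ``by passing partition sums for $[H\cdot X,X]$ to the ucqa limit'' means showing $\sum_k \Delta_k(H\cdot X)\,\Delta_k X\to\int H\,d[X]$; since $\Delta_k(H\cdot X)\approx H_{T^n_{k-1}}\Delta_k X$, this is precisely $\sum_k H_{T^n_{k-1}}(\Delta_k X)^2\to\int H\,d[X]$, i.e.\ your $\Sigma_1^n$ problem with $\phi=H$. The direct way to dispose of $\Sigma_1^n$ is a weak-convergence argument: the increasing step functions $G^n(s):=[X]^n_{T^n_{m(s)}}$, where $m(s):=\max\{k:T^n_k\le s\}$, satisfy $|G^n(s)-[X]^n_s|=(X_s-X_{T^n_{m(s)}})^2\le 4^{-n}$ by fineness for $X$, so $G^n\to[X]$ uniformly on $[0,t]$; since $[X]$ is continuous and increasing, Portmanteau gives $dG^n\to d[X]$ weakly, whence $\sum_k\phi(X_{T^n_{k-1}})(\Delta_k X)^2\to\int_0^t\phi(X_s)\,d[X]_s$ for every continuous $\phi$. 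With this in hand your decomposition $\Sigma_1^n+\Sigma_2^n$ goes through directly, and the polynomial route is unnecessary.
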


\noindent
The last integral
$
  \int_0^t
  F''(X_s)
  \dd[X]_s
$
can be understood in the Lebesgue--Stiltjes sense.

\begin{proof}
  By Taylor's formula,
  \begin{multline*}
    F(X_{T^n_k}) - F(X_{T^n_{k-1}})
    =
    F'(X_{T^n_{k-1}})
    \Bigl(
      X_{T^n_k} - X_{T^n_{k-1}}
    \Bigr)\\
    +
    \frac12
    F''(\xi_k)
    \Bigl(
      X_{T^n_k} - X_{T^n_{k-1}}
    \Bigr)^2,
  \end{multline*}
  where $\xi_k\in[X_{T^n_{k-1}},X_{T^n_{k}}]$
  (and $[a,b]$ is understood to be $[b,a]$ when $a>b$).
  It remains to sum this equality over $k=1,\ldots,K$,
  where $K$ is the largest $k$ such that $T^n_k\le t$,
  and to pass to the limit as $n\to\infty$.
\end{proof}

The next result is a vector version of Theorem~\ref{thm:Ito} and is proved in a similar way.
By a \emph{vector continuous semimartingale} we mean a finite sequence
$X=(X^1,\ldots,X^d)$ of continuous semimartingales considered as a function mapping $(t,\omega)\in[0,\infty)\times\Omega$
to the vector $X_t(\omega)=(X^1_t(\omega),\ldots,X^d_t(\omega))$.

\begin{theorem}\label{thm:Ito-MD}
  Let $F:\bbbr^d\to\bbbr$ be a function of class $C^2$
  and $X=(X^1,\ldots,X^d)$ be a vector continuous semimartingale.
  Then
  \begin{multline}\label{eq:Ito-MD}
    F(X_t)
    =
    F(X_0)
    +
    \sum_{i=1}^d
    \int_0^t
    \frac{\partial F}{\partial x_i}(X_s)
    \dd X^i_s\\
    +
    \frac12
    \sum_{i=1}^d
    \sum_{j=1}^d
    \int_0^t
    \frac{\partial^2 F}{\partial x_i \partial x_j}(X_s)
    \dd[X^i,X^j]_s
    \text{\quad q.a.}
  \end{multline}
\end{theorem}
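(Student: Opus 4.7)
The plan is to mimic the scalar case (Theorem~\ref{thm:Ito}), using the multidimensional second-order Taylor formula in place of the one-dimensional one. First I would fix a sequence of partitions $T^n$ that is fine simultaneously for $X^1,\ldots,X^d$, for the components of their standard decompositions, and for each of the continuous processes $\frac{\partial F}{\partial x_i}(X)$ and $\frac{\partial^2 F}{\partial x_i \partial x_j}(X)$ obtained by composing the $C^2$ function $F$ with the vector $X$. Such a sequence exists by the recipe \eqref{eq:T} with the maximum taken over all these processes jointly. By Corollary~\ref{cor:invariance} the stochastic integrals that appear below will not depend quasi-always on this particular choice.

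Given such a sequence, I would apply Taylor's formula with Lagrange remainder to each increment of $F\circ X$:
\begin{multline*}
  F(X_{T^n_k}) - F(X_{T^n_{k-1}})
  =
  \sum_{i=1}^{d} \frac{\partial F}{\partial x_i}(X_{T^n_{k-1}})
  \bigl(X^i_{T^n_k} - X^i_{T^n_{k-1}}\bigr)\\
  +
  \frac{1}{2}
  \sum_{i=1}^{d}\sum_{j=1}^{d}
  \frac{\partial^2 F}{\partial x_i \partial x_j}(\xi_k)
  \bigl(X^i_{T^n_k} - X^i_{T^n_{k-1}}\bigr)
  \bigl(X^j_{T^n_k} - X^j_{T^n_{k-1}}\bigr),
\end{multline*}
with $\xi_k$ on the segment joining $X_{T^n_{k-1}}$ and $X_{T^n_k}$, and sum from $k=1$ up to the largest $k$ with $T^n_k\le t$. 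The left-hand side telescopes to $F(X_t) - F(X_0)$ modulo a boundary correction on $[T^n_{K_n},t]$ that vanishes by fineness and the continuity of $F\circ X$.

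Passing to the ucqa limit as $n\to\infty$, the first-order sums converge to $\int_0^t \frac{\partial F}{\partial x_i}(X_s)\dd X^i_s$ directly from the definition \eqref{eq:integral} of the It\^o integral (and its extension to continuous semimartingales discussed after Corollary~\ref{cor:invariance}). For each second-order sum I would first replace $\frac{\partial^2 F}{\partial x_i \partial x_j}(\xi_k)$ by $\frac{\partial^2 F}{\partial x_i \partial x_j}(X_{T^n_{k-1}})$: by fineness, $\xi_k$ and $X_{T^n_{k-1}}$ differ by at most $2^{-n}$ in each coordinate, so the uniform continuity of $\frac{\partial^2 F}{\partial x_i \partial x_j}$ on the (random) compact range of $X$ over $[0,t]$ makes the substitution error vanish in the ucqa limit. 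The remaining sum is a Riemann-type sum of the continuous integrand $\frac{\partial^2 F}{\partial x_i \partial x_j}(X)$ against the cross-increments $(X^i_{T^n_k}-X^i_{T^n_{k-1}})(X^j_{T^n_k}-X^j_{T^n_{k-1}})$, whose unweighted version is exactly the covariation approximation \eqref{eq:covariation}.

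The main obstacle is identifying this weighted cross-increment sum with the Lebesgue--Stiltjes integral $\int_0^t \frac{\partial^2 F}{\partial x_i \partial x_j}(X_s)\dd[X^i,X^j]_s$. Since $[X^i,X^j]$ is a finite variation continuous process by Lemmas~\ref{lem:finite-variation} and~\ref{lem:invariant-covariation}, and $\frac{\partial^2 F}{\partial x_i \partial x_j}(X)$ is continuous, the plan is to bound the oscillation of the integrand on each $[T^n_{k-1},T^n_k]$ by $O(2^{-n})$ thanks to fineness and then sum against the total variation of $[X^i,X^j]$ on $[0,t]$ to control the discrepancy between (i) the weighted cross-increment sum and (ii) the corresponding Stieltjes Riemann sum against the partial sums of $[X^i,X^j]^n$, after which the ucqa convergence $[X^i,X^j]^n\to[X^i,X^j]$ finishes the argument. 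This interchange of the two ucqa limits is the delicate point; everything else is a bookkeeping exercise parallel to the scalar proof.
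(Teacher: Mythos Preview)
Your proposal is correct and matches the paper's approach exactly: the paper gives no separate proof of Theorem~\ref{thm:Ito-MD}, saying only that it ``is proved in a similar way'' to the scalar Theorem~\ref{thm:Ito}, and your outline is precisely that scalar argument with the multidimensional Taylor expansion substituted for the one-dimensional one. The step you flag as delicate is controlled by the Cauchy--Schwarz bound $\sum_k\bigl|X^i_{T^n_k}-X^i_{T^n_{k-1}}\bigr|\bigl|X^j_{T^n_k}-X^j_{T^n_{k-1}}\bigr|\le\bigl([X^i]^n_t\,[X^j]^n_t\bigr)^{1/2}$, which keeps the total variation of $[X^i,X^j]^n$ uniformly bounded in $n$ and thereby justifies both the replacement of $\xi_k$ by $X_{T^n_{k-1}}$ and the passage from the weighted cross-increment sums to the Lebesgue--Stiltjes integral.
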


\begin{remark}
  The requirement that $F$ be twice continuously differentiable can be relaxed for the components
  for which $X^i$ has a special form, such as $X^i_t=t$ for all $t$.
  This, however, will not be needed in this paper.
\end{remark}

\section{Dol\'eans exponential and logarithm}

The following theorem introduces a game-theoretic analogue of the Dol\'eans exponential.

\begin{theorem}\label{thm:Doleans-exp}
  If $X$ is a continuous martingale,
  $\EEE(X):=\exp(X-[X]/2)$ is a continuous martingale as well.
\end{theorem}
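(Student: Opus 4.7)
The plan is to apply the multidimensional It\^o formula (Theorem~\ref{thm:Ito-MD}) to $F(x,y):=\exp(x-y/2)$ with the vector continuous semimartingale $(X,[X])$. The components qualify: $X$ itself is a continuous martingale (hence a continuous semimartingale with zero finite-variation part), and by Lemma~\ref{lem:finite-variation} $[X]$ is an increasing, hence finite variation, continuous process, which is a continuous semimartingale with zero martingale part. The function $F$ is smooth, with $\partial_1 F=F$, $\partial_2 F=-F/2$, $\partial^2_{11}F=F$, $\partial^2_{12}F=-F/2$, and $\partial^2_{22}F=F/4$.

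Next I would compute the covariations that appear in~\eqref{eq:Ito-MD}. By Lemma~\ref{lem:invariant-covariation}, covariation depends only on the martingale parts, so $[X,[X]]=0$ and $[[X],[X]]=0$ q.a., while $[X,X]=[X]$. Substituting into~\eqref{eq:Ito-MD}, the mixed and pure-$[X]$ second-order terms drop out, and the surviving terms are
\begin{equation*}
  \EEE(X)_t
  =
  \EEE(X)_0
  +
  \int_0^t \EEE(X)_s \dd X_s
  -
  \frac12 \int_0^t \EEE(X)_s \dd[X]_s
  +
  \frac12 \int_0^t \EEE(X)_s \dd[X]_s
  \text{\quad q.a.,}
\end{equation*}
where the first $d[X]$ integral comes from the $\partial_2 F$ piece of the first-order sum and the second from the $\partial^2_{11}F$ piece of the second-order sum. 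The two Lebesgue--Stiltjes integrals cancel, leaving $\EEE(X)_t = \EEE(X)_0 + \int_0^t \EEE(X)_s \dd X_s$ quasi-always.

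Finally I would invoke Lemma~\ref{lem:stochastic-integral}: the It\^o integral of the continuous process $\EEE(X)$ with respect to the continuous martingale $X$ is itself a continuous martingale, and adding the constant $\EEE(X)_0=e^{X_0}$ preserves this. The main potential obstacle is purely bookkeeping: making sure the vector It\^o formula is applied legitimately, i.e.\ that $[X]$ really counts as a continuous semimartingale in the sense of Section~\ref{sec:martingales} and that Lemma~\ref{lem:invariant-covariation} indeed forces the two covariations involving $[X]$ to vanish quasi-always. Once those points are checked, the drift cancellation is a routine algebraic identity.
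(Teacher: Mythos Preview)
Your proposal is correct and follows essentially the same route as the paper: apply Theorem~\ref{thm:Ito-MD} to $F(x,y)=\exp(x-y/2)$ and the pair $(X,[X])$, use $[X,[X]]=[[X],[X]]=0$, observe the drift cancellation (the paper packages this as $\partial_y F+\tfrac12\partial_{xx}^2 F=0$), and finish with Lemma~\ref{lem:stochastic-integral}.
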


\begin{proof}
  A standard trick
  (cf.\ \cite{Revuz/Yor:1999}, Proposition IV.3.4)
  is to apply the It\^o formula \eqref{eq:Ito-MD}
  to the function $F(x,y)=\exp(x-y/2)$ and vector continuous semimartingale $(X,Y)=(X,[X])$.
  Since $[X,[X]]=0$ (cf.\ the proof of Lemma~\ref{lem:invariant-covariation}), $[[X],[X]]=0$, and
  \begin{equation*}
    \frac{\partial F}{\partial y}
    +
    \frac12 
    \frac{\partial^2 F}{\partial x^2}
    =
    0,
  \end{equation*}
  we have, by the It\^o formula,
  \begin{equation}\label{eq:SDE-1}
    F(X_t,[X]_t)
    =
    F(X_0,0)
    +
    \int_0^t
    \frac{\partial F}{\partial x}(X_s,[X]_s)
    \dd X_s
    \text{\quad q.a.};
  \end{equation}
  therefore, $F(X_t,[X]_t)$ is a continuous martingale
  (by Lemma~\ref{lem:stochastic-integral}).
\end{proof}

\begin{remark}
  Since $\partial F/\partial x = F$,
  \eqref{eq:SDE-1} can be rewritten as the stochastic differential equation
  \begin{equation}\label{eq:SDE-2}
    Y_t
    =
    Y_0
    +
    \int_0^t
    Y_s
    \dd X_s
  \end{equation}
  for $Y:=\exp(X-[X]/2)$;
  the Dol\'eans exponential is its solution.
\end{remark}

Later in this paper we will be given a positive continuous martingale $I$
and will be interested in a continuous martingale $L$ such that $I$ is the Dol\'eans exponential for $L$;
therefore, we are are interested in an inverse operation to taking the Dol\'eans exponential.
(See, e.g., \cite{Jacod/Shiryaev:2003-local}, Section~II.8a, for a measure-theoretic exposition.)

The \emph{Dol\'eans logarithm} $X$ of a positive continuous martingale $Y$ can be defined in two different ways:
by the It\^o integral
\begin{equation}\label{eq:definition-1}
  X_t
  :=
  \ln Y_0
  +
  \int_0^t
  \frac{\dd Y_s}{Y_s}
\end{equation}
and by the more explicit formula
\begin{equation}\label{eq:definition-2}
  X_t
  :=
  \ln Y_t
  +
  \frac12
  [\ln Y]_t.
\end{equation}
The two definitions are equivalent, but we will only check that \eqref{eq:definition-2} implies \eqref{eq:definition-1}
(and so \eqref{eq:definition-2} can be taken as the main definition).
Applying the It\^o formula to the function
\[
  F(y_1,y_2)
  :=
  \ln y_1
  +
  \frac12
  y_2
\]
and the continuous semimartingales $Y$ and $[\ln Y]$,
we obtain the first definition \eqref{eq:definition-1} from the second definition \eqref{eq:definition-2}:
\begin{align*}
  X_t
  &=
  \ln Y_t
  +
  \frac12
  [\ln Y]_t
  =
  F(Y_t,[\ln Y]_t)\\
  &=
  F(Y_0,0)
  +
  \int_0^t
  \frac{\partial F}{\partial y_1}(Y_s,[\ln Y]_s)
  \dd Y_s
  +
  \int_0^t
  \frac{\partial F}{\partial y_2}(Y_s,[\ln Y]_s)
  \dd[\ln Y]_s\\
  &\quad{}+
  \frac12
  \int_0^t
  \frac{\partial^2 F}{\partial y_1^2}(Y_s,[\ln Y]_s)
  \dd[Y]_s\\
  &=
  \ln Y_0
  +
  \int_0^t
  \frac{\dd Y_s}{Y_s}
  +
  \frac12
  \int_0^t
  \dd[\ln Y]_s
  -
  \frac12
  \int_0^t
  \frac{\dd[Y]_s}{Y_s^2}
  =
  \ln Y_0
  +
  \int_0^t
  \frac{\dd Y_s}{Y_s}
  \text{\quad q.a.}
\end{align*}
(The last equality follows from $[\ln Y]_t=\int_0^t\dd[Y]_s/Y_s^2$,
which is easy to check and will be generalized in \eqref{eq:Sigma} below.)
The first definition \eqref{eq:definition-1} shows that the Dol\'eans logarithm of a positive continuous martingale
is a continuous martingale.
The following theorem summarizes our discussion so far in this section
adding a couple of trivial observations.

\begin{theorem}
  If $Y$ is a positive continuous martingale,
  $\LLL(Y):=\ln Y+[\ln Y]/2$ is a continuous martingale.
  For any continuous martingale $X$, $\LLL(\EEE(X))=X$ q.a.
  For any positive continuous martingale $Y$, $\EEE(\LLL(Y))=Y$ q.a.
\end{theorem}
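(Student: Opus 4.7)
The plan is to deduce all three assertions from what has already been proved, using only (i) Lemma~\ref{lem:stochastic-integral}, (ii) Lemma~\ref{lem:invariant-covariation}, and (iii) the identity \eqref{eq:definition-1}$=$\eqref{eq:definition-2} established in the paragraph preceding the theorem.

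First I would dispose of the claim that $\LLL(Y)$ is a continuous martingale whenever $Y$ is a positive continuous martingale. By the equivalence of \eqref{eq:definition-1} and \eqref{eq:definition-2} we may write
\[
  \LLL(Y)_t = \ln Y_0 + \int_0^t \frac{\dd Y_s}{Y_s}
  \text{\quad q.a.}
\]
Since $Y$ is a positive continuous martingale, $1/Y$ is a continuous process, and the right-hand side is an It\^o integral of a continuous process with respect to a continuous martingale. Lemma~\ref{lem:stochastic-integral} then immediately yields that $\LLL(Y)$ is a continuous martingale.

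Next I would verify $\LLL(\EEE(X)) = X$ q.a.\ for any continuous martingale $X$. Writing $Y:=\EEE(X)=\exp(X-[X]/2)$, we have $\ln Y = X - [X]/2$. The process $[X]/2$ is a finite variation continuous process (since $[X]$ is increasing), so $\ln Y$ is the sum of the continuous martingale $X$ and the finite variation continuous process $-[X]/2$; by Lemma~\ref{lem:invariant-covariation}, $[\ln Y] = [X]$ q.a. Hence
\[
  \LLL(\EEE(X))
  = \ln Y + \tfrac12[\ln Y]
  = X - \tfrac12[X] + \tfrac12[X]
  = X
  \text{\quad q.a.}
\]

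Finally, for $\EEE(\LLL(Y)) = Y$ q.a.\ with $Y$ a positive continuous martingale, set $X:=\LLL(Y) = \ln Y + [\ln Y]/2$. By the first part, $X$ is a continuous martingale, and it differs from $\ln Y$ by the finite variation continuous process $[\ln Y]/2$; therefore, by Lemma~\ref{lem:invariant-covariation} again, $[X] = [\ln Y]$ q.a. Consequently $X - [X]/2 = \ln Y$, and
\[
  \EEE(\LLL(Y))
  = \exp\!\bigl(X - [X]/2\bigr)
  = \exp(\ln Y)
  = Y
  \text{\quad q.a.}
\]
The only step that might require some care is the repeated appeal to Lemma~\ref{lem:invariant-covariation} to conclude $[\ln Y]=[X]$ in both directions; this relies on the fact that the standard decomposition of a continuous semimartingale is unique (Corollary~\ref{cor:decomposition}), so that the covariation is unambiguously determined by the martingale part. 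Once this is granted, the calculations above are essentially algebraic and no new estimates are needed.
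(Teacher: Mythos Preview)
Your proposal is correct and follows essentially the same approach as the paper. The paper treats this theorem as a summary: it observes that the first claim follows from the already-established implication \eqref{eq:definition-2}$\Rightarrow$\eqref{eq:definition-1} together with Lemma~\ref{lem:stochastic-integral}, and calls the remaining two claims ``trivial observations''; your argument spells out those observations explicitly via Lemma~\ref{lem:invariant-covariation}, which is the natural way to do it.
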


\begin{remark}
  Informally, \eqref{eq:SDE-2} and \eqref{eq:definition-1} can be rewritten as
  $\dd Y_t=Y_t\dd X_t$ and $\dd X_t=\dd Y_t/Y_t$,
  respectively;
  in this form their similarity is more obvious.
\end{remark}

\section{Probability-free Dubins--Schwarz theo\-rem for continuous mar\-tin\-gales}
\label{sec:Dubins-Schwarz}

In this and next sections we will make two essential steps.
First, in this section we will define a general notion of game-theoretic upper probability.
So far the only probability-type property that we have used
was that of ``quasi-always,''
which is closely connected with events of upper probability zero
(as explained later in this section).
Second, in the next section we will start discussing using a num\'eraire different from cash
(which has been implicitly used so far).

It is shown in \cite{\XXVIII} that, roughly, a continuous price path can be transformed into a Brownian motion
by replacing physical time with quadratic variation.
This time we apply this idea in a way that is closer to the classical Dubins--Schwarz result,
replacing a continuous price path by a continuous martingale
and using Theorem~\ref{thm:conservative-super}.

The initial value $X_0$ of a nonnegative supermartingale $X$ is always a constant.
Given a functional $F:\Omega\to[0,\infty)$, we define its \emph{upper expectation} as
\begin{equation*}
  \UpExpect(F)
  :=
  \inf
  \bigl\{
    X_0
    \bigm|
    \forall\omega\in\Omega:
    \liminf_{t\to\infty}
    X_t(\omega)
    \ge
    F(\omega)
  \bigr\},
\end{equation*}
$X$ ranging over the nonnegative supermartingales.
The \emph{upper probability} $\UpProb(E)$ of a set $E\subseteq\Omega$ is defined as $\UpExpect(\III_E)$,
where $\III_E$ is the indicator function of $E$.
A property of $\omega\in\Omega$ holds \emph{almost surely} (a.s.)\ if its complement
has upper probability zero.
These are standard definitions using cash as num\'eraire
(in the terminology of the next section).
As we mentioned earlier,
the projection onto $\Omega$ of the complement of a property of $t$ and $\omega$ that holds q.a.\
always has upper probability zero.

A \emph{time transformation} is defined to be a continuous increasing
(not necessarily strictly increasing)
function $f:[0,\infty)\to[0,\infty)$ satisfying $f(0)=0$.
A nonnegative functional $F:\Omega\to[0,\infty]$ is \emph{time-superinvariant}
if, for each $\omega\in\Omega$ and each time transformation $f$,
\begin{equation*}
  F(\omega\circ f)
  \le
  F(\omega).
\end{equation*}

\begin{theorem}\label{thm:Dubins-Schwarz}
  Let $F:C[0,\infty)\to[0,\infty]$ be a time-superinvariant $\FFF$-measurable functional,
  and let $X$ be a continuous martingale.
  Then
  \begin{equation}\label{eq:Dubins-Schwarz}
    \UpExpect(F(X))
    \le
    \int F \dd W_{X_0},
  \end{equation}
  where $W_{X_0}$ is Brownian motion starting from $X_0$.
\end{theorem}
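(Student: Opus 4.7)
The plan is to reduce the statement to the case where the continuous martingale $X$ is itself a basic traded price path, and then invoke the probability-free Dubins--Schwarz result for continuous price paths already established in \cite{\XXVIII}. The key enabler for this reduction is Theorem~\ref{thm:conservative-super}, which guarantees that adjoining $X$ as a new tradable does not introduce any genuinely new nonnegative supermartingales in the original market.

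Concretely, I would first enlarge the sample space to $\Omega':=C[0,\infty)^{J+1}$ by adding a new coordinate $X'$ that plays the role of a traded security with initial value $X_0$. In the extended market, $X'$ is a basic continuous price path, so the Dubins--Schwarz-type theorem of \cite{\XXVIII} applies directly to $X'$ and furnishes, for every $c>\int F\dd W_{X_0}$, a nonnegative supermartingale $Z$ in the extended market with $Z_0<c$ and $\liminf_{t\to\infty}Z_t(X',\omega)\ge F(X')$. The role of the time-superinvariance hypothesis is the classical one: it is what allows the bound against Wiener measure to be carried across to an arbitrary continuous price path via a quadratic-variation time change.

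Second, I would transfer the witness back to the original market by projection: set $Z'_t(\omega):=Z_t(X(\omega),\omega)$. By Theorem~\ref{thm:conservative-super}, $Z'$ is a nonnegative supermartingale in the old market with $Z'_0=Z_0<c$. Off $\dom X$ the convention adopted in the statement of that theorem renders $Z'_t=\infty$, which only strengthens the target inequality; on $\dom X$, $X(\omega)$ is a bona fide continuous path starting at $X_0$, so $\liminf_t Z'_t(\omega)\ge F(X(\omega))$ follows directly from the property of $Z$. Hence $Z'$ certifies $\UpExpect(F(X))\le c$, and letting $c\downarrow\int F\dd W_{X_0}$ closes the argument.

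The main obstacle is aligning the statement and proof of the continuous-price-path Dubins--Schwarz from \cite{\XXVIII} with the broader definition of nonnegative supermartingale adopted here: one must verify that the witness produced in \cite{\XXVIII} is still a nonnegative supermartingale under the enlarged class (which it must be, since enlarging the class of supermartingales can only lower $\UpExpect$), and that pinning the initial value of the extra traded coordinate $X'$ at the constant $X_0$ causes no difficulty. Modulo this bookkeeping, the argument amounts to composing the earlier Dubins--Schwarz result with the conservation Theorem~\ref{thm:conservative-super}.
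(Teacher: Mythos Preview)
Your proposal is correct and is essentially the same as the paper's own proof: the paper simply says to combine the $\le$ part of the Dubins--Schwarz theorem from \cite{\XXVIII} with Theorem~\ref{thm:conservative-super}, which is exactly your two-step reduction (apply the price-path result in the extended market, then pull the witness back via conservatism). Your remarks about the broader supermartingale class only lowering $\UpExpect$, about the range $[0,\infty]$ versus $[0,\infty)$, and about the constant initial value $X_0$ match the paper's brief parenthetical bookkeeping comments.
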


In \eqref{eq:Dubins-Schwarz} we set $F(X):=\infty$ when $X\notin C[0,\infty)$.

\begin{proof}
  Combine the part $\le$ of Theorem~6.3 in \cite[technical report]{\XXVIII} with our Theorem~\ref{thm:conservative-super}.
  The former theorem now simplifies since the initial value of a continuous martingale is a constant;
  even though that theorem assumes $F:C[0,\infty)\to[0,\infty)$, its proof also works for $[0,\infty]$ in place of $[0,\infty)$.
  (Notice that the simple part $\ge$ of that theorem
  is not applicable anymore
  since the range of $X$ can contain far from all continuous paths starting at~$X_0$.)
\end{proof}

\begin{corollary}\label{cor:decomposition}
  The decomposition of a continuous semimartingale $X$ into the sum $X=Y+A$
  of a continuous martingale $Y$ and a finite variation continuous process $A$
  is unique (q.a.).
\end{corollary}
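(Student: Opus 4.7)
The plan is to reduce the uniqueness claim to showing that a continuous martingale of finite variation starting at $0$ is identically zero quasi-always. Given two standard decompositions $X = Y_1 + A_1 = Y_2 + A_2$, I would set $M := Y_1 - Y_2 = A_2 - A_1$. Then $M$ is simultaneously a continuous martingale (as a difference of two continuous martingales) and a finite variation continuous process (as a difference of two such), with $M_0 = 0$ and $\dom M = \dom Y_1 = \dom Y_2 = \dom A_1 = \dom A_2 = \dom X$. Showing $M \equiv 0$ q.a.\ immediately yields $Y_1 = Y_2$ and $A_1 = A_2$ q.a.

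The first substantive step is to observe that $[M] \equiv 0$ q.a. I would pick a sequence of partitions $T^n$ fine for $M$. The fineness bound $|M_{T^n_k \wedge t} - M_{T^n_{k-1} \wedge t}| \le 2^{-n}$ on $\dom M$ gives
\[
  [M]^n_t
  =
  \sum_{k=1}^{\infty}
  \bigl(M_{T^n_k \wedge t} - M_{T^n_{k-1} \wedge t}\bigr)^2
  \le
  2^{-n} V_t(M),
\]
where $V_t(M)<\infty$ is the total variation of $M$ on $[0,t]$. Letting $n\to\infty$ and invoking Lemma~\ref{lem:covariation}, $[M]_t = 0$ quasi-always.

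For the final step I would use the Dol\'eans exponential: for each $\lambda \in \bbbr$, Theorem~\ref{thm:Doleans-exp} says $\EEE(\lambda M) = \exp(\lambda M - \lambda^2 [M]/2) = \exp(\lambda M)$ is a continuous martingale, which I would treat as a nonnegative supermartingale of initial value $1$. The countable convex mixture
\[
  Z
  :=
  \frac{3}{\pi^2}
  \sum_{n=1}^{\infty}
  \frac{\EEE(nM) + \EEE(-nM)}{n^2}
\]
is then a nonnegative supermartingale with $Z_0 = 1$, by the closure under countable convex mixtures noted in the proof of Lemma~\ref{lem:qa}. At any $(t,\omega) \in \dom M$ with $M_t(\omega) \ne 0$, the $n$th summand grows like $e^{n|M_t(\omega)|}/n^2$, forcing $Z_t(\omega) = \infty$. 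Adding a standard witness to $\dom M$ holding quasi-always then produces a single nonnegative supermartingale verifying $M \equiv 0$ q.a.

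The step I expect to be the main obstacle is the identification of $\EEE(\lambda M)$ as a nonnegative supermartingale, not merely as a nonnegative continuous martingale. In the measure-theoretic setting this is the routine observation that a nonnegative local martingale is a supermartingale, but in the present framework nonnegative supermartingales and continuous martingales are built from different primitive data---the $\liminf$-closure of nonnegative simple capital processes versus the $\lim$-closure of arbitrary simple capital processes. One remedy is to approximate $\EEE(\lambda M)$ directly in the $\liminf$ sense via an Euler-style discretization of the defining equation $\dd Z = \lambda Z \dd M$, which keeps capital positive by construction, combined with transfinite induction on the rank of $M$; alternatively, one can bypass this issue by invoking Doob's maximal inequality in the game-theoretic form $\UpProb(\sup_t \EEE(\lambda M)_t \ge K) \le 1/K$ and letting $\lambda \to \infty$.
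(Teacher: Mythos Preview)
Your reduction to the claim ``a continuous martingale $M$ of finite variation with $M_0=0$ vanishes q.a.'' is exactly the paper's first step, and your computation $[M]^n_t\le 2^{-n}V_t(M)\to 0$ giving $[M]=0$ q.a.\ is correct. From there the two arguments diverge. The paper applies the probability-free Dubins--Schwarz theorem (Theorem~\ref{thm:Dubins-Schwarz}) to the time-superinvariant indicator of ``finite variation but not constant on some $[0,t]$''; since this event has Wiener measure zero, $\UpExpect$ of the indicator composed with $M$ is zero, which yields directly a nonnegative supermartingale witnessing the claim, and a short constant-extension argument upgrades ``a.s.'' to ``q.a.''. Your route instead exploits $[M]=0$ to get $\EEE(\lambda M)=\exp(\lambda M)$ and then mixes over $\lambda\in\bbbz$.

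The obstacle you flag is genuine and is the crux of the comparison: in this paper's framework nothing guarantees that a nonnegative continuous martingale is a nonnegative supermartingale, because the two classes are built by different closure operations from different seed classes. Your Doob-inequality fallback does not help, since the game-theoretic maximal inequality you quote is itself stated for nonnegative supermartingales, so invoking it for $\EEE(\lambda M)$ is circular. Your Euler-scheme remedy, on the other hand, does go through once combined with Theorem~\ref{thm:conservative-super}: adding $M$ to the market as a traded security, the products $\prod_k\bigl(1+\lambda(M_{T^n_k\wedge t}-M_{T^n_{k-1}\wedge t})\bigr)$ are nonnegative simple capital processes in the extended market for $n$ large enough that $|\lambda|2^{-n}<1$, hence nonnegative supermartingales in the original market; their $\liminf$ equals $\exp(\lambda M)$ on the q.a.\ set where $[M]=0$ and the discrete quadratic variations converge, and combining with witnesses for those q.a.\ sets gives the desired mixture. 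So your approach is viable and, once completed, more self-contained than the paper's (it avoids the external Dubins--Schwarz input from \cite{\XXVIII}), at the cost of this extra construction. The paper's route is shorter precisely because Theorem~\ref{thm:Dubins-Schwarz} hands you the nonnegative supermartingale for free.
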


\noindent
The detailed statement of the corollary is:
if $X=Y+A$ and $X=Y'+A'$ are two such decompositions,
then, quasi-always, $Y|_{[0,t]}=Y'|_{[0,t]}$ and $A|_{[0,t]}=A'|_{[0,t]}$.

\begin{proof}
  Let $X=Y+A=Y'+A'$ be two such decompositions;
  then $Y-Y'=A'-A$, and so we have a continuous martingale
  which is simultaneously a finite variation continuous process.
  Define $F:C[0,\infty)\to[0,\infty]$ to be the indicator functional of a function in $C[0,\infty)$
  having a finite variation over some interval $[0,t]$, $t\in(0,\infty)$,
  while not being constant over that interval.
  By Theorem~\ref{thm:Dubins-Schwarz},
  there exists a nonnegative supermartingale $Z$ with $Z_0=1$
  that tends to $\infty$ on $\omega$
  such that $Y-Y'$ has a finite variation over some $[0,t]$
  without being constant over that interval.
  This nonnegative supermartingale $Z$ will tend to $\infty$
  on any continuation of $\omega|_{[0,t]}$ such that 
  $(Y(\omega)-Y'(\omega))|_{[0,t]}=(A'(\omega)-A(\omega))|_{[0,t]}\ne0$.
  This means that already $Z_t(\omega)=\infty$,
  since we can extend $\omega|_{[0,t]}$ by a constant.
\end{proof}

\Subsection{Comparison between game-theoretic and measure-theoretic probability}

We again consider the measure-theoretic setting introduced at the end of Section~\ref{sec:martingales}.
Let us now check that $\UpExpect(F)\ge\Expect(F)$ for each $\FFF$-measurable nonnegative functional $F:\Omega\to[0,\infty)$,
which will imply that $\UpProb(E)\ge\Prob(E)$ for each $\FFF$-measurable $E\subseteq\Omega$.
(In this sense our definition of upper probability $\UpProb$
is not too permissive, unlike the definition ignoring measurability in \cite{\XLIII}.)
It suffices to prove that, for any nonnegative measure-theoretic supermartingale $X$ with a constant $X_0$,
\[
  \liminf_{t\to\infty} X_t \ge F
  \Longrightarrow
  X_0 \ge \Expect(F).
\]
This can be done using the Fatou lemma:
assuming the antecedent,
\[
  \Expect(F)
  \le
  \Expect
  \left(
    \liminf_{t\to\infty} X_t
  \right)
  \le
  \liminf_{t\to\infty} \Expect(X_t)
  \le
  \liminf_{t\to\infty} \Expect(X_0)
  =
  X_0.
\]

\section{General num\'eraires}
\label{sec:numeraire}

In this section we fix a positive continuous martingale $I:[0,\infty)\times\Omega\to(0,\infty)$
and use it as our num\'eraire for measuring capital at time $t$.
The results in the previous sections can be regarded as a special case
corresponding to $I:=1$ (intuitively, using cash as the num\'eraire).
Generalization (``relativization,'' to use an expression from the theory of computability \cite{Rogers:1967-local}, Section 9.2)
of results with cash as num\'eraire to general num\'eraires
is easy when only one num\'eraire is involved,
but in the next section, devoted to a probability-free version of Girsanov's theorem,
we will see a nontrivial result involving two different num\'eraires.

We start from generalizing the definitions of Section~\ref{sec:martingales}.
We extend our market by adding another continuous path $\bar S_0$
which we will interpret as the unit, ``cash,'' in which the prices $S_1,\ldots,S_{J^*}$ are measured.
Our old unit $S_0:=1$ (which we did not need to mention explicitly) is expressed as $\bar S_0$ in terms of the new unit.
Now we have $J^*+1$ traded securities $\bar S_0,\bar S_1,\ldots,\bar S_{J^*}$,
where $\bar S_j:=\bar S_0 S_j$ for $j=1,\ldots,J^*$.
A \emph{simple trading strategy} $\bar G$ now consists of stopping times $\tau_n$, as before,
and also bounded $\tau_n$-measurable $\bbbr^{J^*+1}$-valued random vectors,
which we will denote $\bar h_n=(H_n,h_n)$, where $H_n$ are random variables and $h_n$ are $\bbbr^{J^*}$-valued random vectors.
In the new picture no borrowing or lending are allowed
(they should be done implicitly via investing in the available $J^*+1$ securities).
The capital available at time $t$ is
\begin{equation}\label{eq:simple-capital-2}
  \bar\K_t
  :=
  \bar h_n \cdot \bar \omega^*(t),
\end{equation}
where $n$ is such that $\tau_n\le t\le\tau_{n+1}$ and we use the notation
\[
  \bar \omega^*
  :=
  (\bar S_0,\tilde \omega^*)
  :=
  (\bar S_0,\bar S_1,\ldots,\bar S_{J^*}).
\]
The simple trading strategy $\bar G$ is required to be \emph{self-financing} in the sense that,
for any $n=2,3,\ldots$,
\begin{equation}\label{eq:self-financing}
  \bar h_{n-1} \cdot \bar \omega^*(\tau_n)
  =
  \bar h_{n} \cdot \bar \omega^*(\tau_n)
\end{equation}
(this property implies that the expression \eqref{eq:simple-capital-2} is well defined
when $t=\tau_i$ for some $i$).
Now the strategy determines its (random) \emph{initial capital}
\begin{equation}\label{eq:initial}
  \bar\K_{\tau_1}
  =
  \bar h_1\cdot\bar\omega^*(\tau_1).
\end{equation}

In the generalized picture we have a symmetry among the $J^*+1$ traded securities
with price paths $(\bar S_0,\ldots,\bar S_{J^*})$;
we chose to use $\bar S_0$ as our num\'eraire but could have chosen any other security
with a positive price path
(we can always restrict the market in a certain way, such as making a basic price path positive).
Let us check that the generalized picture gives the notion of capital,
which we will denote $\bar\K$,
that agrees with our original picture.
By the condition \eqref{eq:self-financing} of being self-financing,
the number $H_n$ of units of cash chosen at time $\tau_n$ should satisfy
\begin{equation}\label{eq:later}
  \bar\K_{\tau_n}
  =
  \bar h_n\cdot\bar\omega^*(\tau_n)
  =
  H_n \bar S_0(\tau_n) + h_n\cdot\tilde\omega^*(\tau_n),
\end{equation}
which gives
\[
  H_n
  =
  \frac{\bar\K_{\tau_n} - h_n\cdot\tilde\omega^*(\tau_n)}{\bar S_0(\tau_n)}.
\]
Notice that this is also true for $n=1$,
in which case we should use \eqref{eq:initial} rather than \eqref{eq:later}.
Therefore, the capital at time $\tau_{n+1}$ becomes
\begin{align*}
  \bar\K_{\tau_{n+1}}
  &=
  H_n \bar S_0(\tau_{n+1}) + h_n\cdot\tilde\omega^*(\tau_{n+1})\\
  &=
  \frac{\bar S_0(\tau_{n+1})}{\bar S_0(\tau_n)}
  \left(
    \bar\K_{\tau_n} - h_n\cdot\tilde\omega^*(\tau_n)
  \right)
  +
  h_n\cdot\tilde\omega^*(\tau_{n+1}),
\end{align*}
which in the units of $\bar S_0$, $\K_t:=\bar\K_t/\bar S_0(t)$, becomes
\[
  \K_{\tau_{n+1}}
  =
  \K_{\tau_n} - h_n\cdot\omega^*(\tau_n)
  +
  h_n\cdot\omega^*(\tau_{n+1})
  =
  \K_{\tau_n}
  +
  h_n\cdot(\omega^*(\tau_{n+1})-\omega^*(\tau_n)).
\]
Since this is also true for any $t\in[\tau_n,\tau_{n+1}]$ in place of $\tau_{n+1}$,
we obtain \eqref{eq:simple-capital-1} for all $t\ge\tau_1$
(where $c$ is the initial capital \eqref{eq:initial} expressed in units of $\bar S_0$).
For $t\le\tau_1$,
we stipulate that the strategy has an initial endowment of $c$ units of $\bar S_0$
(where $c$ is a given constant),
which makes \eqref{eq:simple-capital-1} true for all $t\in[0,\infty)$.

Let $I$ be any of the securities $S_j>0$ among $S_0=1,S_1,\ldots,S_{J^*}$
in the original market ($S_j$ being positive is our restriction on the market);
in view of Theorems~\ref{thm:conservative-super}--\ref{thm:conservative}
we will later allow $I$ to be any positive continuous martingale.
We have just seen that for any simple capital process $X$
(in our original picture with cash $S_0=1$ as the num\'eraire),
the process $X/I=X\bar S_0/\bar S_j$ will be a simple capital process
in the picture with $I$ as the num\'eraire.
A simple argument based on transfinite induction shows that this statement can be extended to continuous martingales:
for any continuous martingale $X$,
the process $X/I$ will be a continuous martingale in the picture with $I$ as the num\'eraire;
the inductive step is based on the identity
\begin{equation}\label{eq:identity}
  \lim_{k\to\infty}
  \frac{X^k_t(\omega)}{I_t(\omega)}
  =
  \frac{\lim_{k\to\infty}X^k_t(\omega)}{I_t(\omega)}.
\end{equation}
It is also true that, for any nonnegative supermartingale $X$,
the process $X/I$ will be a nonnegative supermartingale in the picture with $I$ as the num\'eraire;
the inductive step is now based on the identity \eqref{eq:identity} with $\liminf$ in place of $\lim$.

We call processes of the type $X_t(\omega) / I_t(\omega)$, 
where $X$ is a nonnegative supermartingale (resp., a continuous martingale)
\emph{nonnegative $I$-supermartingales} (resp., \emph{continuous $I$-martingales});
they are completely analogous to nonnegative supermartingales (resp., continuous martingales)
but use $I$ rather than cash as the num\'eraire.
The processes $X$ of the form $Y+A$,
where $X$ is a continuous $I$-martingale and $A$ is a finite variation continuous process,
are \emph{continuous $I$-semimartingales}.
Whereas the first two notions very much depend on the choice of $I$,
the Girsanov theorem in the next section will show that the notion of a continuous $I$-semimartingale is invariant.

\Subsection{It\^o integration and its applications for a general num\'eraire}

Theorem~\ref{thm:Ito-integral} remains true if $X$ is a continuous $I$-martingale
rather than a continuous martingale because it involves the notion of ``quasi-always,''
which is defined in terms of becoming infinitely rich infinitely quickly
and so does not depend on the num\'eraire.
It is obvious that Lemma~\ref{lem:stochastic-integral} remains true
if the two entries of ``continuous martingale'' are replaced by ``continuous $I$-martingale.''
In Corollary~\ref{cor:invariance} we can allow $X$ to be a continuous $I$-martingale
(and in its interpretation we can allow definitions of sequences of partitions to depend on $I$).
Finally, the definition of $H\cdot X$ carries over to continuous $I$-semimartingales.
Notice that $H\cdot X$ does not depend on $I$.

We have the following corollary of Lemma~\ref{lem:covariation}.

\begin{corollary}
  Every pair of continuous $I$-martingales $X,Y$ possesses covariation $[X,Y]$ q.a.,
  which satisfies the integration by part formula \eqref{eq:by-parts}.
\end{corollary}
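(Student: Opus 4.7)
The plan is to reduce the corollary to the extension of Lemma~\ref{lem:covariation} to continuous semimartingales (asserted in the paragraph between Lemma~\ref{lem:finite-variation} and Lemma~\ref{lem:invariant-covariation}), by first showing that every continuous $I$-martingale is itself a continuous semimartingale in the original (cash) sense.

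First I would unfold the definition: a continuous $I$-martingale $X$ is by construction of the form $M/I$ for some continuous martingale $M$. Applying the multidimensional It\^o formula (Theorem~\ref{thm:Ito-MD}) to $F(u,v):=u/v$ and the vector continuous semimartingale $(M,I)$ yields
\begin{equation*}
  X_t
  =
  X_0
  +
  \int_0^t \frac{\dd M_s}{I_s}
  -
  \int_0^t \frac{M_s\,\dd I_s}{I_s^2}
  -
  \int_0^t \frac{\dd[M,I]_s}{I_s^2}
  +
  \int_0^t \frac{M_s\,\dd[I]_s}{I_s^3}
  \quad\text{q.a.}
\end{equation*}
The first two integrals combine into a continuous martingale by Lemma~\ref{lem:stochastic-integral}, and the last two are Lebesgue--Stieltjes integrals against finite variation continuous processes, hence of finite variation themselves. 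So $X$ has a standard decomposition; it is a continuous semimartingale, and the identical argument applies to $Y$.

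With $X$ and $Y$ now identified as continuous semimartingales, I would invoke the semimartingale version of Lemma~\ref{lem:covariation}: for any sequence of partitions fine for all four components of the standard decompositions of $X$ and $Y$, the sums $[X,Y]^n_t$ from~\eqref{eq:covariation} converge ucqa to the required $[X,Y]_t$, and passing to the ucqa limit in the pointwise identity $X_tY_t=(X\cdot Y)^n_t+(Y\cdot X)^n_t+[X,Y]^n_t$ delivers the integration by parts formula~\eqref{eq:by-parts}.

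The hard part will be justifying It\^o's formula for $F(u,v)=u/v$, which is only of class $C^2$ off the axis $\{v=0\}$, whereas Theorem~\ref{thm:Ito-MD} demands $F\in C^2(\bbbr^2)$. Since $I$ is positive on $\dom I$ and $\dom I$ holds quasi-always, this is handled by standard localization: one applies Theorem~\ref{thm:Ito-MD} to a $C^2$ extension of $F|_{\bbbr\times[\epsilon,\infty)}$ and to $(M,I)$ stopped at $\sigma_\epsilon:=\inf\{t:I_t\le\epsilon\}$ (a stopping time, via the Galmarino-style reasoning that underlies Lemma~\ref{lem:Sigma}), and then passes to a limit along a countable sequence $\epsilon_k\downarrow 0$, using Lemma~\ref{lem:qa} to merge the exceptional sets.
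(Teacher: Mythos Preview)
Your argument is correct, but it takes a genuinely different route from the paper's. The paper treats this corollary as an immediate relativization of Lemma~\ref{lem:covariation}: continuous $I$-martingales are, by definition, exactly the continuous martingales in the market that uses $I$ as num\'eraire, and the entire chain of results leading to Lemma~\ref{lem:covariation} (It\^o integral, ucqa convergence, etc.) goes through verbatim in that market because the notion ``quasi-always'' is num\'eraire-independent. So no new computation is needed; one simply rereads Lemma~\ref{lem:covariation} with $I$ in place of cash.

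Your approach stays in the cash num\'eraire throughout: you show via It\^o's formula that $X=M/I$ is a continuous (cash) semimartingale, and then invoke the semimartingale extension of Lemma~\ref{lem:covariation}. This is more laborious (it needs the localization of $F(u,v)=u/v$ and the semimartingale version of the lemma), but it has the merit of being fully explicit and of establishing, as a by-product, that continuous $I$-martingales are continuous cash-semimartingales---a fact the paper only obtains later, as a consequence of the Girsanov theorem (Theorem~\ref{thm:Girsanov}). The paper's route is shorter and better matches the ``relativization'' philosophy of Section~\ref{sec:numeraire}; yours is more hands-on and anticipates part of the Girsanov content.
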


\noindent
Lemma~\ref{lem:invariant-covariation} continues to hold for continuous $I$-semimartingales.
Theorems~\ref{thm:Ito} and~\ref{thm:Ito-MD} carry over to continuous $I$-semimartingales verbatim.
The definitions of the Dol\'eans exponential and logarithm and their properties
carry over verbatim to the case of continuous $I$-martingales.
For example, Theorem~\ref{thm:Doleans-exp} implies:

\begin{corollary}
  If $X$ is a continuous $I$-martingale,
  $\exp(X-[X]/2)$ is a continuous $I$-martingale.
\end{corollary}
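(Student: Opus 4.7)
The plan is to mimic the proof of Theorem~\ref{thm:Doleans-exp} verbatim in the relativized setting, exploiting the fact that every tool used there has already been shown to carry over from the cash num\'eraire to a general positive continuous martingale $I$: the It\^o formulas (Theorems~\ref{thm:Ito} and~\ref{thm:Ito-MD}) apply to continuous $I$-semimartingales, covariation and quadratic variation behave identically, and the relativized Lemma~\ref{lem:stochastic-integral} says that the It\^o integral with respect to a continuous $I$-martingale is again a continuous $I$-martingale.

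Concretely, I would proceed as follows. First, observe that if $X$ is a continuous $I$-martingale, then $[X]$ (the quadratic variation defined via the ucqa limit of~\eqref{eq:covariation}) is a finite variation continuous process, so the pair $(X,[X])$ qualifies as a vector continuous $I$-semimartingale. Next, apply the relativized two-dimensional It\^o formula to $F(x,y)=\exp(x-y/2)$ evaluated at $(X_t,[X]_t)$. As noted in the proof of Theorem~\ref{thm:Doleans-exp}, one has $[X,[X]]=0$ and $[[X],[X]]=0$ q.a.\ (finite-variation processes contribute nothing to covariation, by the argument of Lemma~\ref{lem:invariant-covariation}), and $F$ satisfies the heat-type identity $\partial_y F + \tfrac12\partial_{xx}F = 0$. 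Hence the $\dd[X]_s$ term from the first-order part cancels the $\dd[X]_s$ term coming from the second-order part, and the $\dd[\,[X]\,]_s$, $\dd[X,[X]]_s$ terms vanish, leaving
\begin{equation*}
  \exp(X_t-[X]_t/2)
  =
  \exp(X_0)
  +
  \int_0^t
  \exp(X_s-[X]_s/2)
  \dd X_s
  \quad\text{q.a.}
\end{equation*}

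Finally, since $X$ is a continuous $I$-martingale, the relativized Lemma~\ref{lem:stochastic-integral} immediately gives that the right-hand side, being (a constant plus) an It\^o integral against a continuous $I$-martingale, is a continuous $I$-martingale. Therefore $\EEE(X)=\exp(X-[X]/2)$ is a continuous $I$-martingale, as claimed. There is no genuine obstacle here: all the substantive work was done by transferring the It\^o machinery (Theorems~\ref{thm:Ito}, \ref{thm:Ito-MD}, and Lemma~\ref{lem:stochastic-integral}) to the $I$-num\'eraire, which the preceding discussion has already established. The only point that requires a moment's care is verifying that the finite-variation nature of $[X]$ in the $I$-setting is enough to make the covariation identities $[X,[X]]=[[X],[X]]=0$ hold q.a., but this is exactly the content of Lemma~\ref{lem:invariant-covariation}, which the paper also extends to continuous $I$-semimartingales.
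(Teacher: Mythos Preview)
Your proposal is correct and is precisely the approach the paper intends: the corollary is stated immediately after the remark that all the It\^o machinery (Theorems~\ref{thm:Ito}, \ref{thm:Ito-MD}, Lemma~\ref{lem:stochastic-integral}) carries over verbatim to the $I$-num\'eraire, and the paper simply says ``Theorem~\ref{thm:Doleans-exp} implies'' the result---meaning exactly that the proof of Theorem~\ref{thm:Doleans-exp} goes through unchanged in the relativized setting. You have spelled out in detail what the paper leaves as a one-line observation.
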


\Subsection{Dubins--Schwarz theorem for a general num\'eraire}

Let $I$ be a positive continuous martingale.
We define \emph{upper $I$-expectation} by
\begin{equation}\label{eq:upper-expectation-I}
  \UpExpectI(F)
  :=
  \inf
  \bigl\{
    X_0
    \bigm|
    \forall\omega\in\Omega:
    \liminf_{t\to\infty}
    X_t(\omega) / I_t(\omega)
    \ge
    F(\omega)
  \bigr\},
\end{equation}
$X$ ranging over the nonnegative supermartingales and $F$ over nonnegative functionals,
and specialize it to \emph{upper $I$-probability} as $\UpProbI(E):=\UpExpectI(\III_E)$.
The definition~\eqref{eq:upper-expectation-I} can be rewritten as
\begin{equation*}
  \UpExpectI(F)
  =
  \inf
  \bigl\{
    X_0
    \bigm|
    \forall\omega\in\Omega:
    \liminf_{t\to\infty}
    X_t(\omega)
    \ge
    F(\omega)
  \bigr\},
\end{equation*}
$X$ ranging over the nonnegative $I$-supermartingales.

The results of Section~\ref{sec:Dubins-Schwarz} carry over
to the case of a general positive continuous martingale $I$ as num\'eraire.
In particular, we have the following version of Theorem~\ref{thm:Dubins-Schwarz}.

\begin{corollary}\label{cor:Dubins-Schwarz}
  Let $F:C[0,\infty)\to[0,\infty]$ be a time-superinvariant $\FFF$-measurable functional,
  $I$ be a positive continuous martingale,
  and $X$ be a continuous $I$-martingale.
  Then
  \[
    \UpExpectI(F(X))
    \le
    \int F \dd W_{X_0}.
  \]
\end{corollary}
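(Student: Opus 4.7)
The plan is to reduce Corollary~\ref{cor:Dubins-Schwarz} to Theorem~\ref{thm:Dubins-Schwarz} by using the num\'eraire-change bijection established in Section~\ref{sec:numeraire}. The map $Z\mapsto Z/I$ carries nonnegative supermartingales bijectively onto nonnegative $I$-supermartingales and continuous martingales bijectively onto continuous $I$-martingales; in particular, the initial value $X_0$ of a continuous $I$-martingale $X$ is a deterministic constant (being $Z_0/I_0$ for the associated continuous martingale $Z=XI$), so the right-hand side $\int F\dd W_{X_0}$ is well defined.

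The first step is to verify the $I$-analog of Theorem~\ref{thm:conservative-super}: adding a continuous $I$-martingale $X$ as a new traded security does not enlarge the class of nonnegative $I$-supermartingales. Writing a candidate nonnegative $I$-supermartingale in the extended market as $Z/I$ with $Z$ a nonnegative supermartingale in the extended market, and noting that trading $X$ with $I$ as num\'eraire is equivalent, by the self-financing reformulation of Section~\ref{sec:numeraire}, to trading the continuous martingale $XI$ with cash as num\'eraire, one invokes Theorem~\ref{thm:conservative-super} to conclude that $Z$ already belongs to the original market, whence $Z/I$ does too.

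With this $I$-analog in hand, the rest of the proof mirrors that of Theorem~\ref{thm:Dubins-Schwarz}: combine the analog just established with Theorem~6.3 of \cite{\XXVIII}, which supplies the Dubins--Schwarz time change for a continuous path adjoined to the market and is itself insensitive to the choice of num\'eraire, to obtain the required bound. The main obstacle is purely bookkeeping: one must check systematically that every ingredient of the proof of Theorem~\ref{thm:Dubins-Schwarz}, in particular the auxiliary Theorem~6.3 of \cite{\XXVIII} and the underlying time-change construction, transfers verbatim under $Z\leftrightarrow Z/I$. This invariance is precisely what the num\'eraire-change discussion of Section~\ref{sec:numeraire} is designed to supply, so the work is line-by-line translation rather than fresh conceptual input.
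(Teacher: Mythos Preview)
Your proposal is correct and matches the paper's intended argument: the paper gives no explicit proof of this corollary, merely remarking that ``the results of Section~\ref{sec:Dubins-Schwarz} carry over to the case of a general positive continuous martingale $I$ as num\'eraire,'' and your write-up is a faithful unpacking of that relativization via the bijection $Z\leftrightarrow Z/I$ of Section~\ref{sec:numeraire}. In particular, your reduction of the $I$-analog of Theorem~\ref{thm:conservative-super} to Theorem~\ref{thm:conservative-super} itself (by passing from the continuous $I$-martingale $X$ to the continuous martingale $XI$) is exactly the mechanism the paper has in mind.
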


\section{Girsanov theorem}

Now we state a probability-free Girsanov theorem,
the main result of this section.
It shows that the notion of a continuous semimartingale does not depend on the num\'eraire
and gives the explicit decomposition (unique by Corollary~\ref{cor:decomposition})
into a continuous martingale and a finite variation continuous process
for a continuous martingale in a new num\'eraire.

\begin{theorem}\label{thm:Girsanov}
  Let $M$ be a continuous martingale and $I$ be a positive continuous martingale.
  The process
  \begin{equation}\label{eq:Girsanov}
    M_t - \int_0^t \frac{\dd[I,M]_s}{I_s}
  \end{equation}
  (where the integral is Lebesgue--Stiltjes; cf.\ Lemma~\ref{lem:finite-variation})
  is a continuous $I$-martingale.
\end{theorem}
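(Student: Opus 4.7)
By the definition of a continuous $I$-martingale as a process of the form $X/I$ with $X$ a continuous martingale, it suffices to take $X := N I$ and check that $N I$ is a continuous martingale in the original (cash) num\'eraire. Set $A_t := \int_0^t \dd[I,M]_s / I_s$, understood as a Lebesgue--Stieltjes integral; since $[I,M]$ is a finite variation continuous process by Lemma~\ref{lem:finite-variation} and $1/I$ is a locally bounded continuous integrand, $A$ is itself a finite variation continuous process with $A_0 = 0$. Thus $N = M - A$ is a continuous semimartingale with standard decomposition having martingale part $M$ and finite variation part $-A$.

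The core computation is to apply the integration by parts formula~\eqref{eq:by-parts} (in the semimartingale form noted after Lemma~\ref{lem:finite-variation}) twice. For the continuous martingales $M$ and $I$,
\[
  M_t I_t
  = M_0 I_0
  + \int_0^t M_s \dd I_s
  + \int_0^t I_s \dd M_s
  + [I,M]_t
  \quad \text{q.a.}
\]
For the semimartingales $A$ and $I$, using $A_0 = 0$, the fact that $[A,I] = 0$ q.a.\ by Lemma~\ref{lem:invariant-covariation} (the martingale part of $A$ is zero), and the pathwise identity $I_s \dd A_s = \dd[I,M]_s$ that follows directly from $\dd A_s = \dd[I,M]_s / I_s$, I obtain
\[
  A_t I_t
  = \int_0^t A_s \dd I_s
  + [I,M]_t
  \quad \text{q.a.}
\]
Subtracting, the $[I,M]_t$ terms cancel and
\[
  N_t I_t
  = M_0 I_0
  + \int_0^t N_s \dd I_s
  + \int_0^t I_s \dd M_s
  \quad \text{q.a.}
\]
The constant $M_0 I_0$ is a rank-zero continuous martingale, and each of the two stochastic integrals is an integral of a continuous process with respect to a continuous martingale and so is a continuous martingale by Lemma~\ref{lem:stochastic-integral}. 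Hence $N I$ is a continuous martingale, and therefore $N = (N I)/I$ is a continuous $I$-martingale.

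The main subtleties, rather than genuine obstacles, are (i) invoking the integration by parts formula for the pair $(A,I)$ where $A$ is only a finite variation continuous process, which uses the extension of Lemma~\ref{lem:covariation} to continuous semimartingales mentioned after Lemma~\ref{lem:finite-variation}, and (ii) justifying the Lebesgue--Stieltjes identity $\int_0^t I_s \dd A_s = [I,M]_t$, which is the pathwise chain rule applied to the density $1/I$ in the definition of $A$. The substantive content of the theorem is entirely the cancellation of the $[I,M]_t$ terms between the two integration-by-parts identities, which turns the finite variation correction $-A$ into a bona fide drift compensator under the new num\'eraire.
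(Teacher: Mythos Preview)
Your argument is correct and rests on the same engine as the paper's proof---two applications of the integration-by-parts formula~\eqref{eq:by-parts}---but you organize the computation differently. The paper first shows $I_tM_t-[I,M]_t$ is a continuous martingale (so $M_t-[I,M]_t/I_t$ is a continuous $I$-martingale), and then applies integration by parts \emph{in the $I$-picture} to the product $(1/I_t)\,[I,M]_t$, using that $1/I$ is a continuous $I$-martingale and that $[1/I,[I,M]]=0$. You instead stay entirely in the cash num\'eraire: you apply integration by parts to $MI$ and to $AI$ separately and subtract, so the $[I,M]_t$ terms cancel and $NI$ is visibly a sum of stochastic integrals against the continuous martingales $I$ and $M$. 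Your route is slightly more elementary in that it never invokes the $I$-semimartingale version of integration by parts or the $I$-martingale status of $1/I$; the paper's route, on the other hand, makes the structure ``$M-[I,M]/I$ is already an $I$-martingale, up to an $I$-martingale correction'' more transparent. Both are short and valid.
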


\begin{proof}
  Our proof will be standard
  (see, e.g., Protter \cite{Protter:2005-local}, the proof of Theorem~III.39).

  Remember that, by the integration by parts formula (see Lemma~\ref{lem:covariation}),
  \[
    I_t M_t
    =
    (I\cdot M)_t
    +
    (M\cdot I)_t
    +
    [I,M]_t
    \quad
    \text{q.a.}
  \]
  Since $I\cdot M$ and $M\cdot I$ are continuous martingales,
  \[
    I_t M_t
    -
    [I,M]_t
  \]
  is also a continuous martingale,
  and so
  \[
    M_t
    -
    \frac{1}{I_t}
    [I,M]_t
  \]
  is a continuous $I$-martingale.
  The integration by parts formula 
  (Lemma~\ref{lem:covariation},
  which is also applicable to continuous semimartingales, with the same proof)
  allows us to transform the subtrahend (the product of continuous $I$-semimartingales) as
  \begin{equation}\label{eq:subtrahend}
    \frac{1}{I_t}
    [I,M]_t
    =
    \int_0^t \frac{\dd[I,M]_s}{I_s}
    +
    \int_0^t [I,M]_s \dd\frac{1}{I_s}
    +
    \left[
      \frac{1}{I},
      [I,M]
    \right]_t
    \quad
    \text{q.a.}
  \end{equation}
  The second addend on the right-hand side of \eqref{eq:subtrahend} is a continuous $I$-martingale
  since $1/I_t$ is,
  and the third addend is a continuous $I$-martingale
  since it is zero
  q.a.\ (see the argument at the end of the proof of Lemma~\ref{lem:invariant-covariation});
  therefore, \eqref{eq:Girsanov} is also a continuous $I$-martingale.
\end{proof}

The notion of Dol\'eans logarithm allows us to simplify the statement of Theorem~\ref{thm:Girsanov} as follows.

\begin{corollary}\label{cor:Girsanov}
  Let $M$ be a continuous martingale, $I$ be a positive continuous martingale, and $L$ be the Dol\'eans logarithm of $I$.
  The process
  \begin{equation*}
    M_t - [L,M]_t
  \end{equation*}
  is a continuous $I$-martingale.
\end{corollary}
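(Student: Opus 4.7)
The plan is to derive Corollary~\ref{cor:Girsanov} from Theorem~\ref{thm:Girsanov} by identifying the Lebesgue--Stieltjes integral $\int_0^t \dd[I,M]_s / I_s$ with $[L,M]_t$ quasi-always. Once this identity is in hand, the corollary is immediate, since by Theorem~\ref{thm:Girsanov} the process $M_t - \int_0^t \dd[I,M]_s/I_s$ is a continuous $I$-martingale.

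To establish the identity, I would exploit that $I$ is the Dol\'eans exponential of $L$: by the last theorem of Section~7, $\EEE(\LLL(I)) = I$ q.a., which by \eqref{eq:SDE-2} means $I_t = I_0 + \int_0^t I_s \, \dd L_s$. The key ingredient is then the associativity-type rule
\[
  \Bigl[\textstyle\int_0^\cdot H_s \, \dd X_s, \, Y\Bigr]_t
  =
  \int_0^t H_s \, \dd[X,Y]_s,
\]
applied with $H = I$, $X = L$, $Y = M$, which yields $[I,M]_t = \int_0^t I_s \, \dd[L,M]_s$. In differential form this reads $\dd[I,M]_s = I_s \, \dd[L,M]_s$, and dividing through by $I_s > 0$ and integrating gives $\int_0^t \dd[I,M]_s / I_s = [L,M]_t - [L,M]_0 = [L,M]_t$, as required.

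The main obstacle is justifying the associativity rule above in the probability-free setting, since it is not isolated as a lemma in the paper. The natural route is to return to the defining Riemann-type approximations: along a fine sequence of partitions $T^n$, both $[\int H\,\dd X, Y]^n_t$ computed from \eqref{eq:covariation} and the Riemann--Stieltjes approximation of $\int_0^t H_s \, \dd[X,Y]_s$ reduce, up to corrections vanishing ucqa as $n\to\infty$, to the common sum $\sum_k H_{T^n_{k-1}\wedge t}(X_{T^n_k\wedge t} - X_{T^n_{k-1}\wedge t})(Y_{T^n_k\wedge t} - Y_{T^n_{k-1}\wedge t})$, using that $\int_0^t H\,\dd X$ is the ucqa limit of \eqref{eq:integral} and that $[X,Y]$ is the ucqa limit of \eqref{eq:covariation}.

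A pedestrian alternative bypasses the associativity rule: apply the It\^o formula (Theorem~\ref{thm:Ito}) with $F(x) = \ln x$ to the positive continuous martingale $I$ to obtain $\ln I_t = \ln I_0 + \int_0^t \dd I_s / I_s - \frac12\int_0^t \dd[I]_s / I_s^2$ q.a.; then, since the last term has finite variation and by \eqref{eq:definition-2} the difference $L - \ln I = \frac12[\ln I]$ also has finite variation, Lemma~\ref{lem:invariant-covariation} gives $[L,M] = [\ln I, M] = [\int_0^\cdot \dd I_s/I_s, M]$, reducing the problem to the same associativity identity but in the simpler setting where only Lemma~\ref{lem:invariant-covariation} and the approximations \eqref{eq:integral}, \eqref{eq:covariation} are needed.
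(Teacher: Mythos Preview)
Your proposal is correct and follows essentially the same route as the paper: both reduce the corollary to the identity $\int_0^t \dd[I,M]_s/I_s = [L,M]_t$ q.a.\ and establish it by passing to Riemann-sum approximations along a fine sequence of partitions. The only difference is packaging---the paper writes out the chain of approximations directly (see \eqref{eq:fine-1}--\eqref{eq:final}) without isolating your associativity rule $[H\cdot X,\,Y]=\int H\,\dd[X,Y]$ as a separate lemma, but the underlying computation is the same.
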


\begin{proof}
  It suffices to prove
  \begin{equation*}
    \int_0^t \frac{\dd[I,M]_s}{I_s}
    =
    [L,M]_t
    \quad
    \text{q.a.};
  \end{equation*}
  remember that the integral on the left-hand side is the usual Lebesgue--Stiltjes integral.
  For sufficiently fine sequences of partitions, we have:
  \begin{align}
    \int_0^t \frac{\dd[I,M]_s}{I_s}
    &=
    \lim_{n\to\infty}
    \sum_{k=1}^{\infty}
    \frac
    {
      [I,M]_{T^n_k\wedge t}
      -
      [I,M]_{T^n_{k-1}\wedge t}
    }
    {I_{T^n_{k-1}\wedge t}}
    \label{eq:fine-1}\\
    &=
    \lim_{n\to\infty}
    \sum_{k=1}^{\infty}
    \frac
    {
      \left(
        I_{T^n_k\wedge t}
        -
        I_{T^n_{k-1}\wedge t}
      \right)
      \left(
        M_{T^n_k\wedge t}
        -
        M_{T^n_{k-1}\wedge t}
      \right)
    }
    {I_{T^n_{k-1}\wedge t}}
    \label{eq:fine-2}\\
    &=
    \lim_{n\to\infty}
    \sum_{k=1}^{\infty}
    \frac
    {
      I_{T^n_k\wedge t}
      -
      I_{T^n_{k-1}\wedge t}
    }
    {I_{T^n_{k-1}\wedge t}}
    \left(
      M_{T^n_k\wedge t}
      -
      M_{T^n_{k-1}\wedge t}
    \right)
    \label{eq:fine-3}\\
    &=
    \lim_{n\to\infty}
    \sum_{k=1}^{\infty}
    \left(
      L_{T^n_k\wedge t}
      -
      L_{T^n_{k-1}\wedge t}
    \right)
    \left(
      M_{T^n_k\wedge t}
      -
      M_{T^n_{k-1}\wedge t}
    \right)
    \label{eq:fine-4}\\
    &=
    [L,M]_t
    \quad
    \text{q.a.}
    \label{eq:final}
  \end{align}
  (The transition from \eqref{eq:fine-1} to \eqref{eq:fine-2} requires refining the partitions:
  the expression after $\lim_{n\to\infty}$ in \eqref{eq:fine-1} for a given large value of $n$
  is approximately equal to the expression after $\lim_{n\to\infty}$ in \eqref{eq:fine-2} for much larger values of $n$.
  A similar remark can be made about the transition from \eqref{eq:fine-4} to \eqref{eq:fine-3}.)
\end{proof}

\section{Applications to the equity premium and CAPM}

In this section we rederive, generalize, and strengthen various results in \cite{\XLIV}.
Let us fix two positive continuous martingales, $S$ and $I$.
We interpret $S$ as a stock and $I$ as an index (something like S\&P500),
so that, e.g., $S$ can be one of the traded $S_j$ and $I$ can be their weighted average
(under the restriction that all basic price paths are positive).

For a positive continuous martingale $X$ (interpreted as the price of a financial security),
we define its \emph{cumulative relative growth} as
\[
  \Mu^X_t(\omega)
  :=
  \int_0^t
  \frac{\dd X_s}{X_s}
  =
  \Lambda^X_t
  -
  \ln X_0,
\]
where $\Lambda^X$ stands for the Dol\'eans logarithm of $X$.
(In this section we mainly follow the terminology of \cite{\XLIV}.)
The \emph{relative covariation} of positive continuous martingales $X$ and $Y$ is the Lebesgue--Stiltjes integral
\begin{equation}\label{eq:Sigma}
  \Sigma^{X,Y}_t(\omega)
  :=
  \int_0^t
  \frac{\dd[X,Y]_s}{X_s Y_s}
  =
  [\Lambda^X,\Lambda^Y]_t
  \quad
  \text{q.a.}
\end{equation}
(cf.\ Lemma~\ref{lem:finite-variation};
the expression in terms of the Dol\'eans logarithms
can be derived similarly to \eqref{eq:fine-1}--\eqref{eq:final}).
The following corollary of our probability-free Girsanov theorem
is a key result implying all others in this section.

\begin{theorem}[\cite{\XLIV}, Theorem~8.3]
  \label{thm:key}
  The process $\Mu^S_t - \Sigma^{S,I}_t$ is a continuous $I$-martingale.
\end{theorem}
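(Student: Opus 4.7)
The plan is to reduce the theorem to a direct application of Corollary~\ref{cor:Girsanov} with $M := \Lambda^S$, the Doléans logarithm of $S$. Since $S$ is a positive continuous martingale, the theorem on the Doléans logarithm tells us that $\Lambda^S$ is itself a continuous martingale, so it qualifies as the $M$ in that corollary. Applying the corollary with the given positive continuous martingale $I$ (whose Doléans logarithm is $\Lambda^I$) yields that
\[
  \Lambda^S_t - [\Lambda^I,\Lambda^S]_t
\]
is a continuous $I$-martingale.

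Next I would rewrite this expression in terms of $\Mu^S$ and $\Sigma^{S,I}$. From the definition of $\Mu^S$ in the present section, $\Lambda^S_t = \Mu^S_t + \ln S_0$; and from the second equality in \eqref{eq:Sigma} we have $[\Lambda^I,\Lambda^S]_t = \Sigma^{S,I}_t$ quasi-always. Substituting these identities, the continuous $I$-martingale above equals
\[
  \Mu^S_t - \Sigma^{S,I}_t + \ln S_0
  \quad \text{q.a.}
\]
so it remains only to absorb the additive constant $\ln S_0$.

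For this last step I would invoke the fact that the class of continuous $I$-martingales is closed under translation by real constants: if $Z = X/I$ with $X$ a continuous martingale, then $Z - c = (X - cI)/I$, and $X - cI$ is again a continuous martingale because the class of continuous martingales is closed under linear combinations (simple capital processes are closed under addition of strategies, and the $\lim$-closure propagates this closure through the transfinite construction). Applying this with $c = \ln S_0$ concludes that $\Mu^S_t - \Sigma^{S,I}_t$ is a continuous $I$-martingale. I do not expect a genuine obstacle here: the work has already been done in Corollary~\ref{cor:Girsanov} and in the identity $[\Lambda^I,\Lambda^S]=\Sigma^{S,I}$, and the only care needed is the bookkeeping of the constant $\ln S_0$ and the observation that $\Lambda^S$ is the correct continuous martingale to feed into the Girsanov corollary.
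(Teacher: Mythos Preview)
Your proposal is correct and follows essentially the same route as the paper: both reduce the claim to Corollary~\ref{cor:Girsanov} and the identity $[\Lambda^S,\Lambda^I]=\Sigma^{S,I}$. The only cosmetic difference is that the paper applies the corollary to $M:=\Mu^S$ directly (which is already a continuous martingale by Lemma~\ref{lem:stochastic-integral}, being the It\^o integral $\int_0^t\dd S_s/S_s$), so the constant $\ln S_0$ never appears; you instead apply it to $M:=\Lambda^S$ and then remove the constant afterward, which is equally valid.
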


\begin{proof}
  Applying Corollary~\ref{cor:Girsanov}
  to the cumulative relative growth $M:=\Mu^S$,
  we obtain that
  \begin{equation*}
    \Mu^S_t - [\Mu^S,\Lambda^I]_t
    =
    \Mu^S_t - \Sigma^{S,I}_t
  \end{equation*}
  is a continuous $I$-martingale.
\end{proof}

Since Theorem~\ref{thm:key} is applicable to any pair $(S,I)$ of positive continuous martingales,
we can replace $S$ by $I$ obtaining the following equity premium result,
in which $\Sigma^I:=\Sigma^{I,I}$ stands for the \emph{relative quadratic variation} of $I$.

\begin{corollary}[\cite{\XLIV}, Corollary~8.2]
  \label{cor:equity-premium}
  The process $\Mu^I_t - \Sigma^{I}_t$ is a continuous $I$-martingale.
\end{corollary}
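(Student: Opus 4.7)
The plan is essentially to invoke Theorem~\ref{thm:key} with the stock $S$ specialized to the index $I$. Theorem~\ref{thm:key} asserts, for any pair of positive continuous martingales $(S,I)$, that $\Mu^S_t - \Sigma^{S,I}_t$ is a continuous $I$-martingale. Since $I$ is itself a positive continuous martingale, the hypotheses of Theorem~\ref{thm:key} are satisfied by the pair $(I,I)$, so the substitution is unproblematic.

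Making the substitution $S := I$, I would conclude that
\[
  \Mu^I_t - \Sigma^{I,I}_t
\]
is a continuous $I$-martingale. By the definition introduced just before the statement of the corollary, $\Sigma^I := \Sigma^{I,I}$, and so this is precisely the process $\Mu^I_t - \Sigma^I_t$ named in the corollary.

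There is no real obstacle here: Corollary~\ref{cor:equity-premium} is a direct specialization of Theorem~\ref{thm:key}, whose work has already been done via Corollary~\ref{cor:Girsanov} (and through it the probability-free Girsanov theorem, Theorem~\ref{thm:Girsanov}). The only thing worth remarking is that one does not need to re-examine the quasi-always equality \eqref{eq:Sigma} defining $\Sigma^{X,Y}$ in this diagonal case, because the general result of Theorem~\ref{thm:key} is already phrased in terms of the $\Sigma$-notation. Hence the proof reduces to a single sentence invoking Theorem~\ref{thm:key} with $S$ replaced by $I$.
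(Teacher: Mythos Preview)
Your proposal is correct and matches the paper's own argument exactly: the paper simply remarks that Theorem~\ref{thm:key} applies to any pair $(S,I)$ of positive continuous martingales, and substituting $S:=I$ (together with the definition $\Sigma^I:=\Sigma^{I,I}$) yields the corollary.
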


We can apply various results of this paper to the continuous $I$-martingale of Theorem~\ref{thm:key}
(and then specialize them to the continuous $I$-martingale of Corollary~\ref{cor:equity-premium}).

  \begin{corollary}[\cite{\XLIV}, Lemma~8.5]
    \label{cor:exp}
    For each $\epsilon\in\bbbr$, the process
    \begin{equation*}
      \exp
      \left(
        \epsilon (\Mu^S_t - \Sigma^{S,I}_t)
        -
        \frac{\epsilon^2}{2} \Sigma^{S}_t
      \right)
    \end{equation*}
    is a continuous $I$-martingale.
  \end{corollary}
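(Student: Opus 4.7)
The plan is to recognize the expression as a Dol\'eans exponential of a suitable continuous $I$-martingale, and then invoke the $I$-martingale version of Theorem~\ref{thm:Doleans-exp} (stated as the corollary immediately preceding the statement to be proved).

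First, by Theorem~\ref{thm:key}, the process $M_t := \Mu^S_t - \Sigma^{S,I}_t$ is a continuous $I$-martingale. A routine transfinite induction on rank shows that scalar multiples of continuous $I$-martingales are continuous $I$-martingales (at the base case, multiplying all bets $h_n$ of a simple trading strategy by $\epsilon$ gives another simple trading strategy whose capital process is $\epsilon$ times the original; the inductive step is immediate from linearity of $\lim$). Hence $\epsilon M$ is a continuous $I$-martingale, and applying the Dol\'eans-exponential corollary to $\epsilon M$ yields that $\exp(\epsilon M - [\epsilon M]/2)$ is a continuous $I$-martingale.

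The remaining step is to identify $[\epsilon M]$ with $\epsilon^2 \Sigma^S$ quasi-always, so that the exponent becomes the one displayed in the statement. Bilinearity of covariation (which follows directly from the definition~\eqref{eq:covariation} since the sum of products scales by $\epsilon^2$) gives $[\epsilon M] = \epsilon^2 [M]$ q.a. Next, $M = \Mu^S - \Sigma^{S,I}$ is the sum of a continuous ($I$-)martingale part $\Mu^S$ (indeed $\Mu^S = \Lambda^S - \ln S_0$ differs from the Dol\'eans logarithm $\Lambda^S$ by a constant) and the finite-variation continuous process $-\Sigma^{S,I}$; by Lemma~\ref{lem:invariant-covariation} this standard decomposition shows $[M] = [\Mu^S] = [\Lambda^S]$ q.a. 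Finally, the definition~\eqref{eq:Sigma} combined with $\Sigma^S := \Sigma^{S,S}$ gives $\Sigma^S_t = [\Lambda^S, \Lambda^S]_t = [\Lambda^S]_t$ q.a., so altogether $[\epsilon M] = \epsilon^2 \Sigma^S$ q.a.

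I expect the only point requiring slight care is the interplay between these identities and the word ``q.a.''\ versus everywhere. Since the Dol\'eans-exponential corollary is stated for a continuous $I$-martingale $X$ with its intrinsic $[X]$, I would first apply it to produce $\exp(\epsilon M - [\epsilon M]/2)$ as a continuous $I$-martingale, and then rewrite $[\epsilon M] = \epsilon^2 \Sigma^S$ q.a.; the conclusion is unaffected by modifications on evanescent sets because continuous $I$-martingales are determined by their values inside their effective domain, and the effective domains of both processes agree quasi-always (by Lemma~\ref{lem:qa}). This gives the stated representation, completing the proof.
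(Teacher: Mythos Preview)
Your proof is correct and follows essentially the same approach as the paper, which simply says ``Combine Theorems~\ref{thm:key} and~\ref{thm:Doleans-exp}.'' You have merely unpacked this one-line instruction: you apply Theorem~\ref{thm:key} to obtain the continuous $I$-martingale $M=\Mu^S-\Sigma^{S,I}$, scale by $\epsilon$, apply the $I$-version of Theorem~\ref{thm:Doleans-exp}, and then identify $[\epsilon M]=\epsilon^2\Sigma^S$ via Lemma~\ref{lem:invariant-covariation} and~\eqref{eq:Sigma}---all of which is exactly what the paper's terse proof intends.
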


  \begin{proof}
    Combine Theorems~\ref{thm:key} and~\ref{thm:Doleans-exp}.
  \end{proof}

  Corollary~\ref{cor:exp} played an important role in \cite{\XLIV},
  allowing us to derive analogues of the following two corollaries.
  It does not play any special role in our current exposition
  (although it might become important again
  if we allow jumps in the basic price paths).

  The following corollary strengthens Corollary~8.6 of \cite{\XLIV}
  optimizing the inequality in it
  (see Figure~1 in \cite{\XLIV} for an illustration of the difference between the old and new inequalities).

\begin{corollary}\label{cor:CAPM-CLT}
  If $\delta>0$ and $\tau_T:=\inf\{t\st \Sigma^{S}_t\ge T\}$ for some constant $T>0$,
  \begin{equation}\label{eq:inequality}
    \UpProbI
    \left\{
      \left|
        \Mu^S_{\tau_T} - \Sigma^{S,I}_{\tau_T}
      \right|
      \ge
      z_{\delta/2}
      \sqrt{T}
    \right\}
    \le
    \delta,
  \end{equation}
  where $z_{\delta/2}$ is the upper $\delta/2$-quantile of the standard Gaussian distribution
  and the inequality ``$\ge$'' in \eqref{eq:inequality} is regarded as false when $\tau_T=\infty$.
\end{corollary}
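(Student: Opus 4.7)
The plan is to reduce the claim to a single application of the probability-free Dubins--Schwarz inequality for the num\'eraire $I$ (Corollary~\ref{cor:Dubins-Schwarz}), applied to the continuous $I$-martingale supplied by Theorem~\ref{thm:key}.

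First, I would set $M_t:=\Mu^S_t-\Sigma^{S,I}_t$, so that by Theorem~\ref{thm:key} $M$ is a continuous $I$-martingale with $M_0=0$. Since $\Sigma^{S,I}$ is a finite variation continuous process, Lemma~\ref{lem:invariant-covariation} gives $[M]=[\Mu^S]$ q.a.; a pass to a ucqa limit on a fine sequence of partitions (matching~\eqref{eq:Sigma} specialized to $X=Y=S$) shows $[\Mu^S]=\Sigma^S$ q.a. Hence $\tau_T=\inf\{t\st[M]_t\ge T\}$ quasi-always, so the quantity we wish to control is simply the value of~$M$ at the first time its own quadratic variation hits~$T$.

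Next, I would apply Corollary~\ref{cor:Dubins-Schwarz} to $M$ with the $\FFF$-measurable functional
\[
  F(\omega)
  :=
  \III\bigl\{\sigma_T(\omega)<\infty,\,\lvert\omega(\sigma_T(\omega))\rvert\ge z_{\delta/2}\sqrt{T}\bigr\},
  \qquad
  \sigma_T(\omega):=\inf\{s\st Q_s(\omega)\ge T\},
\]
where $Q_s(\omega)$ is the pathwise quadratic variation produced by the adaptive level-set partitions~\eqref{eq:T} with $\omega$ itself in the role of both $H$ and~$X$. By construction $Q(M)=[M]$ q.a.\ (from the partition-independence argument behind Theorem~\ref{thm:Ito-integral} and Lemma~\ref{lem:covariation}), hence $F(M)=\III\{|M_{\tau_T}|\ge z_{\delta/2}\sqrt{T}\}$ q.a.\ with the convention agreed upon for $\tau_T=\infty$; and classically $Q_s(W_0)=s$ almost surely, hence $\int F\dd W_0=\Prob\{|W_0(T)|\ge z_{\delta/2}\sqrt{T}\}=\delta$ by the defining property of~$z_{\delta/2}$. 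Time-superinvariance of~$F$ is automatic from the level-set construction: under a time transformation~$f$ the level-set partition of $\omega\circ f$ is the $f^{-1}$-image of that of~$\omega$ restricted to the range of~$f$, so $\omega\circ f$ evaluated at $\sigma_T(\omega\circ f)$ equals $\omega(\sigma_T(\omega))$ whenever the range of~$f$ is large enough to reach $Q$-time~$T$, and $F(\omega\circ f)=0$ otherwise. Corollary~\ref{cor:Dubins-Schwarz} then delivers
\[
  \UpProbI\bigl\{|M_{\tau_T}|\ge z_{\delta/2}\sqrt{T}\bigr\}
  \le
  \UpExpectI(F(M))
  \le
  \int F\dd W_0
  =
  \delta.
\]

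The main obstacle is the construction of the pathwise quadratic variation $Q$ satisfying simultaneously: Borel measurability on $C[0,\infty)$, agreement with the paper's $[M]$ q.a., the identity $Q_s(W_0)=s$ a.s., and the covariance $Q_{\cdot}(\omega\circ f)=Q_{f(\cdot)}(\omega)$ under time changes. The adaptive recipe of~\eqref{eq:T} makes all four plausible---the first and fourth by construction, the second by partition-invariance of covariation, the third by classical Brownian arguments---but checking the fourth carefully for degenerate~$f$ (with flat portions or limited range) is where the real bookkeeping lies; a cruder exponential (Chernoff) estimate using Corollary~\ref{cor:exp} would only yield a bound of the form $2\exp(-z_{\delta/2}^2/2)$, strictly weaker than~$\delta$, which is why Dubins--Schwarz is indispensable here.
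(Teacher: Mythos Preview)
Your proposal is correct and follows exactly the route the paper takes: the paper's proof is the single sentence ``Combine Theorem~\ref{thm:key} and Corollary~\ref{cor:Dubins-Schwarz},'' and what you have written is precisely the unpacking of that sentence---identifying $M=\Mu^S-\Sigma^{S,I}$ as the continuous $I$-martingale, recognising $[M]=\Sigma^S$ so that $\tau_T$ is the hitting time of $[M]$, and constructing the time-superinvariant indicator functional to feed into the Dubins--Schwarz bound. Your closing remark that the exponential route via Corollary~\ref{cor:exp} would only give $2\exp(-z_{\delta/2}^2/2)>\delta$ is also on target and matches the paper's comment that this corollary ``strengthens Corollary~8.6 of~\cite{\XLIV}.''
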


\begin{proof}
  Combine Theorem~\ref{thm:key} and Corollary~\ref{cor:Dubins-Schwarz}.
\end{proof}

\begin{corollary}[\cite{\XLIV}, Corollary~8.7]
  \label{cor:CAPM-LIL}
  Almost surely w.r.\ to $\UpProbI$,
  \begin{equation*}
    \Sigma^{S}_t\to\infty
    \Longrightarrow
    \limsup_{t\to\infty}
    \frac{\left|\Mu^S_t-\Sigma^{S,I}_t\right|}{\sqrt{2\Sigma^{S}_t\ln\ln\Sigma^{S}_t}}
    =
    1.
  \end{equation*}
\end{corollary}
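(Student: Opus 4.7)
The plan is to reduce the claim to the classical law of the iterated logarithm (LIL) for Brownian motion via the probability-free Dubins--Schwarz theorem. Set $N_t := \Mu^S_t - \Sigma^{S,I}_t$, which is a continuous $I$-martingale by Theorem~\ref{thm:key}. Since $\Sigma^{S,I}$ is a finite variation continuous process, bilinearity of covariation together with the vanishing argument at the end of the proof of Lemma~\ref{lem:invariant-covariation} give $[N]_t = [\Mu^S]_t = \Sigma^S_t$ quasi-always, so the desired conclusion is exactly the LIL for the continuous $I$-martingale $N$ in terms of its own quadratic variation.

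Introduce the pathwise quadratic-variation functional
\[
  V(\omega,t) := \lim_{n\to\infty}\sum_{k=1}^{\infty}\bigl(\omega(T^n_k(\omega)\wedge t) - \omega(T^n_{k-1}(\omega)\wedge t)\bigr)^2,
\]
along the canonical sequence of partitions $T^n(\omega)$ attached to $\omega$ via the analogue of \eqref{eq:T}. A direct check shows $V(\omega\circ f,t) = V(\omega,f(t))$ for every time transformation $f$; the approximating sums match \eqref{eq:covariation}, so $V(N(\omega),\cdot) = [N]_\cdot(\omega) = \Sigma^S_\cdot(\omega)$ quasi-always and $V(B,t)=t$ Wiener-a.s.\ for Brownian motion $B$. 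Now define the $\FFF$-measurable indicator functional
\[
  F(\omega) := \III\left\{\omega \st V(\omega,\infty)=\infty \text{ and } \limsup_{t\to\infty}\frac{|\omega(t)|}{\sqrt{2\,V(\omega,t)\ln\ln V(\omega,t)}}\ne 1\right\}.
\]
Then $F$ is time-superinvariant: if $\lim_{t\to\infty}f(t)=\infty$, the identity $V(\omega\circ f,t) = V(\omega,f(t))$ yields $F(\omega\circ f) = F(\omega)$; otherwise $V(\omega\circ f,\infty)<\infty$ and $F(\omega\circ f) = 0 \le F(\omega)$.

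Apply Corollary~\ref{cor:Dubins-Schwarz} to the continuous $I$-martingale $N$ (noting $N_0 = 0$) and the functional $F$ to obtain
\[
  \UpExpectI(F(N)) \le \int F\dd W_0 = 0,
\]
the right-hand side vanishing because $F(B)=1$ occurs precisely on the Wiener-null event that the classical LIL for Brownian motion fails. Hence $F(N) = 0$ almost surely w.r.\ to $\UpProbI$, and substituting $V(N,\cdot) = \Sigma^S$ into the definition of $F$ gives exactly Corollary~\ref{cor:CAPM-LIL}.

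The main obstacle is the careful construction of $V$ as a single $\FFF$-measurable functional on $C[0,\infty)$ whose partition-dependent definition still produces the correct limit pathwise along $N$ and Wiener-a.s.\ along $W_0$; the required uniformity is essentially the content of Theorem~\ref{thm:Ito-integral} and its proof in \cite{\XLII}, together with the probability-free LIL machinery for continuous price paths in \cite{\XXVIII}, to which one may alternatively defer the entire Dubins--Schwarz step.
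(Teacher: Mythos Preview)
Your proposal is correct and follows essentially the same approach as the paper: the paper's proof is the single line ``Combine Theorem~\ref{thm:key} and Corollary~\ref{cor:Dubins-Schwarz} with the law of the iterated logarithm for measure-theoretic Brownian motion,'' and you have carefully unpacked exactly that combination---identifying $N=\Mu^S-\Sigma^{S,I}$ as the continuous $I$-martingale, computing $[N]=\Sigma^S$, building the time-superinvariant indicator $F$ encoding LIL failure, and invoking Corollary~\ref{cor:Dubins-Schwarz} against $W_0$. Your closing caveat about the measurability and partition-independence of the pathwise functional $V$ is well placed; it is precisely the technical residue the paper absorbs into the references \cite{\XXVIII} and \cite{\XLII}.
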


\begin{proof}
  Combine Theorem~\ref{thm:key} and Corollary~\ref{cor:Dubins-Schwarz} with the law of the iterated logarithm
  for measure-theoretic Brownian motion.
\end{proof}

\Subsection{Comparisons with the standard CAPM}

  For completeness, we reproduce here Section~9 of \cite{\XLIV}
  comparing our probability-free CAPM with the standard version.
Assuming zero interested rates ($R_{{\rm f}}=0$), the standard CAPM says,
in the standard framework of measure-theoretic probability, that
\[
  \Expect(R_i)
  =
  \frac{\Cov(R_i,R_{{\rm m}})}{\Var(R_{{\rm m}})}
  \Expect(R_{{\rm m}})
\]
in the notation of \cite{Wikipedia:CAPM},
where $\Expect(R_i)$ is the expected return of the $i$th security,
$\Expect(R_{{\rm m}})$ is the expected return of the market,
$\Var(R_{{\rm m}})$ is the variance of the return of the market,
and $\Cov(R_i,R_{{\rm m}})$ is the covariance between the returns of the $i$th security and the market.

Replacing the theoretical expected values
(including those implicit in $\Var(R_{{\rm m}})$ and $\Cov(R_i,R_{{\rm m}})$)
by the empirical averages,
we obtain an approximate equality
\begin{equation}\label{eq:CAPM}
  \Mu^S_t
  \approx
  \frac{\Sigma^{S,I}_t}{\Sigma^I_t}
  \Mu^I_t.
\end{equation}
This approximate equality is still true in our probability-free framework
(under the assumptions $\Sigma^I_t\gg1$ and $\Sigma^S_t\gg1$):
indeed, our equity premium result, Corollary~\ref{cor:equity-premium}, implies $\Mu^I_t\approx\Sigma^I_t$
(e.g.,
\begin{equation*}
  \Sigma^{I}_t\to\infty
  \Longrightarrow
  \limsup_{t\to\infty}
  \frac{\left|\Mu^I_t-\Sigma^{I}_t\right|}{\sqrt{2\Sigma^{I}_t\ln\ln\Sigma^{I}_t}}
  =
  1
  \qquad
  \text{$\UpProbI$-a.s.}
\end{equation*}
is the special case of Corollary~\ref{cor:CAPM-LIL} corresponding to $S=I$),
which makes \eqref{eq:CAPM} equivalent to $\Mu^S_t\approx\Sigma^{S,I}_t$,
our game-theoretic CAPM
(see, e.g., Corollary~\ref{cor:CAPM-LIL}).
Therefore, Corollary~\ref{cor:CAPM-LIL} represents the CAPM as a law of the iterated logarithm;
similarly, Corollary~\ref{cor:CAPM-CLT} represents it as a central limit theorem.

\section{Conclusion}

This paper introduces a probability-free theory of martingales
in financial markets with continuous price paths
and applies it to the equity premium and CAPM.
These are the most obvious directions of further research:
\begin{itemize}
\item
  Allow price paths with jumps.
\item
  Explore the class of continuous martingales, as defined in this paper,
  as stochastic processes in the situation where $S_1,\ldots,S_{J^*}$
  are sample paths of measure-theoretic continuous local martingales;
  in particular, explore conditions under which this class
  coincides with the class of all continuous local martingales
  with a deterministic initial value.
  Similar questions can be asked about nonnegative supermartingales.
\end{itemize}

  \subsection*{Acknowledgments}

  This research was supported by the Air Force Office of Scientific Research
  (grant FA9550-14-1-0043).

\newcommand{\noopsort}[1]{}

\end{document}